%% file: main.tex
\documentclass[acmtog,nonacm]{acmart}

\setcopyright{none}
\usepackage{booktabs} % For formal tables
\usepackage{enumitem}

\usepackage{algorithm} % Algorithm float environment
\usepackage{algpseudocode} % Algorithmic pseudocode (\State, \Require, etc.)
\floatname{algorithm}{ALGORITHM}
\usepackage{bm} % Algorithm float environment

\acmJournal{TOG}

\newif\ifshowmemos
\newcommand{\s}{{\bm{s}}}
\newcommand{\brho}{{\bm{\rho}}}
\newcommand{\bX}{{\bm{X}}}
\newcommand{\bx}{{\bm{x}}}
\newcommand{\bY}{{\bm{Y}}}

\newcommand{\bu}{{\bm{u}}}
\newcommand{\bz}{{\bm{z}}}
\newcommand{\bZ}{{\bm{Z}}}
\newcommand{\bd}{{\bm{d}}}
\newcommand{\bmu}{{\bm{\mu}}}
\newcommand{\bell}{{\bm{\ell}}}
\newcommand{\beps}{{\bm{\epsilon}}}

\begin{document}
% Title portion
%\title{The Stochasticity-Determinism Tradeoff in Single-Photon Sensing}
\title{One Photon, Many Worlds: Posteriors and Predictions with Single-Photon Cameras}
\author{Haejoon Lee}
\affiliation{%
  \institution{Carnegie Mellon University}
  \city{Pittsburgh}
  \state{PA}
  \country{USA}}
\email{haejoonl@andrew.cmu.edu}

\author{Mohit Gupta}
\affiliation{%
  \institution{University of Wisconsin-Madison}
  \city{Madison}
  \state{WI}
  \country{USA}}
\email{mohitg@cs.wisc.edu}

\author{Vijayakumar Bhagavatula}
\affiliation{%
  \institution{Carnegie Mellon University}
  \city{Pittsburgh}
  \state{PA}
  \country{USA}}
\email{vk16@andrew.cmu.edu}

\author{Aswin C. Sankaranarayanan}
\affiliation{%
  \institution{Carnegie Mellon University}
  \city{Pittsburgh}
  \state{PA}
  \country{USA}}
\email{saswin@andrew.cmu.edu}

\begin{teaserfigure}
\centering
  \includegraphics[width=0.89\textwidth]{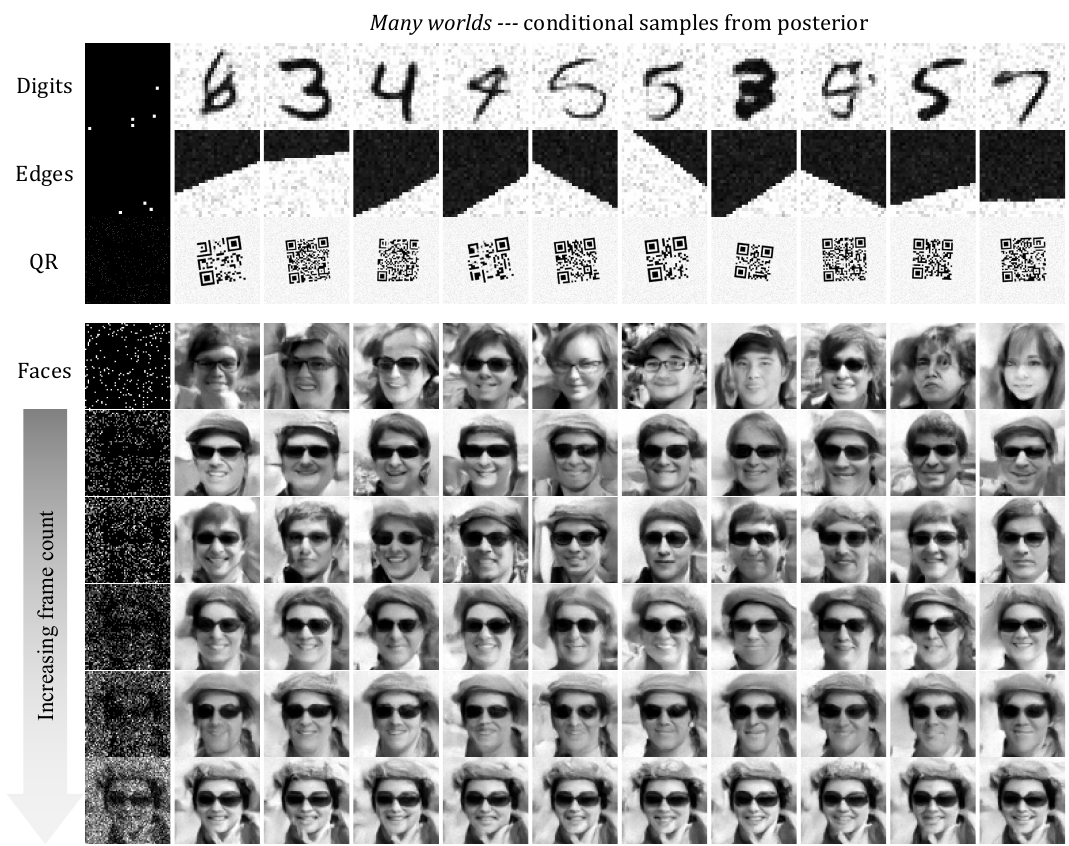}
  \caption{SPAD cameras operated in binary-frame mode record at most one detection per pixel per exposure. The top four rows shows samples from the conditional posterior given a single (simulated) binary SPAD frame, shown in the first column where each pixel has measured, at most,\textit{one photon}. In such photon starved regimes, the inverse problem of image reconstruction is one-to-many, leading to \textit{many worlds} of possibilities in the output. This uncertainty reduces as we increase the number of binary SPAD frames, as seen in subsequent rows. The conditional samples were produced by training diffusion models for each of the four applications (digits, edges, QR codes, and faces), and empirically characterizes how uncertainty in photon arrivals map to uncertainty in reconstruction and inference. In the case of the faces data, notice how different features like sunglasses, gender, facial expression, and identity take different number of frames to resolve correctly.}
  \label{fig:teaser}
\end{teaserfigure}

\begin{abstract}
  Single-photon avalanche diode (SPAD) cameras operate fundamentally differently from conventional cameras due to their photon-counting nature. 
Each frame produces a binary image: pixels report zero if no photons arrived during exposure, and one if one or more photons arrived.
Reconstructing a scene or inferring its properties from a single binary frame is difficult because many different images could produce the same measurement; thus, the inverse problem is fundamentally one-to-many.
As we gather more binary measurements, the inherent uncertainty associated with the inverse problem and any associated inference diminishes.
With sufficient photon counts, photon noise becomes negligible relative to the signal mean, enabling near-deterministic scene recovery and inference.
This work characterizes the transition from stochastic to near-deterministic scene understanding as photon budget increases, analyzing how the stochasticity in photon arrival affects downstream inference tasks.
Technically, we develop a conditional generative framework based on a Hypergeometric frame-thinning process for accumulated binary SPAD measurements.
Generative models capture the one-to-many nature of photon-starved inverse problems, enabling empirical characterization of how this ambiguity diminishes with increasing measurements and its impact on downstream tasks like character recognition, QR code decoding, and facial analysis.
\end{abstract}

%
% The code below should be generated by the tool at
% http://dl.acm.org/ccs.cfm
% Please copy and paste the code instead of the example below.
%
\begin{CCSXML}
<ccs2012>
   <concept>
       <concept_id>10010147.10010178.10010224.10010226.10010236</concept_id>
       <concept_desc>Computing methodologies~Computational photography</concept_desc>
       <concept_significance>500</concept_significance>
       </concept>
 </ccs2012>
\end{CCSXML}

\ccsdesc[500]{Computing methodologies~Computational photography}

\keywords{Computational Imaging, SPAD Sensors, Generative models}

\maketitle

\section{Introduction}
\label{sec:intro}
\input{intro.tex}

\section{Prior work}
\label{sec:prior}
\input{prior.tex}

\section{Overview}
\label{sec:overview}
\input{overview.tex}
\section{Conditional Image Generation}
\label{sec:method}

\input{method.tex}

\input{cases.tex}

\section{Discussions}
\label{sec:discuss}

\input{discuss.tex}

%%
%% The next two lines define the bibliography style to be used, and the
%% bibliography file.
\bibliographystyle{ACM-Reference-Format}
\nocite{*}
\bibliography{refs}

\clearpage
\appendix
% \section{Appendix}
\label{sec:appendix}
\input{appendix.tex}

\end{document}

%% file: intro.tex
%In the beginning, there was \textit{darkness}. 
%%
%Prior to exposure, no photons had yet reached the SPAD sensor, and the scene before the camera held a multitude of possibilities.
%%
%Then, the first measurements were made, with some pixels registering photon arrivals and sketching a binary image. 
%%
%Though the possibilities narrowed, ample uncertainty persisted in what lay before the camera.
%%
%As successive binary frames  accumulated, this uncertainty in the scene dramatically reduced.
%%
%%At some point, this stochasticity in the nature of the scene was largely eliminated and we had a deterministic understanding of the scene and its properties.
%%
%Lo and behold, stochasticity  yielded to determinism; with enough \textit{light},  the scene and its properties were resolved into stark clarity.
%%
%This work examines this arc from stochasticity to determinism, tracing how randomness in photon arrivals, through the SPAD's photon-counting nature, impacts downstream inference.

Imagine a photon counting sensor, capable of resolving light at the granularity of individual photons.
Prior to exposure, this camera has no knowledge of the scene in front of it, and so there are multitude of possibilities on the image it could acquire.
Once the first measurements are made, some pixels register photon arrivals, we have a binary sketch of the scene.
This narrows the space of images that we could be imaging, albeit not by much.
As the sensor gathers more measurements, the uncertainty in the image being observed decreases progressively.
At some point, we enter the realm of well exposed photographs, where uncertainty is reduced so significantly that the acquired measurements convey a sense of determinism (although the uncertainty never quite goes to zero).
%Although the possibilities  are narrowed, there is still ample uncertainty in the scene being imaged.
%
%As successive binary frames accumulate, this uncertainty in the scene is progressively reduced.
%
%At some point, this stochasticity in the nature of the scene is largely eliminated and we have a near-deterministic understanding of the image and its properties
%%
%Lo and behold, stochasticity  yields to determinism; with enough light,  the scene and its properties were resolved into stark clarity.
%%
This work examines this arc of how uncertainty in scene reconstruction and downstream inference changes as the photon budget increases; effectively, we seek to characterize how randomness in photon arrivals, through the  photon-counting nature of single-photon cameras, manifests itself in reconstruction and inference tasks.

The central thesis of this work is how the photon budget determines the ability to interpret a scene and its properties, in the context of a single-photon avalanche diodes (SPADs), a specific kind of single-photon image sensor.
Consider the scenario where a SPAD sensor is observing a (static) scene, making a sequence of  measurements, where each measurement is a binary frame indicating whether photons arrived at each pixel during the exposure time.
The stochasticity of such binary measurements is well understood.
The measurement at a pixel in each binary frame is a Bernoulli random variable; at low flux, the mean of this random variable is equal to the incident light level.
When the SPAD acquires multiple such binary measurements, their sum at each pixel is a binomial distribution which, at large measurement counts, is well approximated by a Poisson distribution.
Both the mean and the variance of this Poisson distribution increase linearly with the number of measurements made by the sensor.
This is the regime of a well-exposed photograph; although photon noise dominates,  the signal-to-noise ratio of intensity at each pixel monotonically increases with accumulating SPAD measurements.
Hence, the output of any deterministic operator on the image—say, a classifier or a regressor—becomes largely independent of the photon noise. %; we call this the \textit{deterministic regime}.

At low photon budgets, we encounter an entirely different problem.
As an illustrative example, consider the extreme scenario of a single binary measurement from the SPAD sensor.
As before, we have a precise characterization of the stochasticity of the measurement in terms of its probability mass function and its dependence on the incident flux  at each sensor pixel.
However, if we take an arbitrary natural image as the candidate for the scene being observed, then the posterior probability of this image given the measurement is likely similar to that of any other image.
This reveals that the mapping from a single binary measurement to the space of images is fundamentally \textit{one-to-many}.
The problem persists as we gradually increase the number of measurements, with the one-to-many nature of the problem gradually collapsing to one-to-one  in the limit when the signal-to-noise ratio goes to zero.

\textit{We are interested in characterizing the one-to-many nature of the image reconstruction problem underlying SPAD measurements at low photon budgets.}
Our primary insight is that generative models are uniquely suited for studying this problem as they can learn functional mappings between two random variables.
This naturally aligns with our setting since we want to characterize how randomness in SPAD measurements maps to uncertainty in image space.
By leveraging the generative aspects of these models, we can empirically sample the space of images that are tied to a single or multiple SPAD measurements.
This allows us to characterize the nature of the one-to-many mapping problem, and  how  uncertainty in the image space evolves with the number of measurements.
Consequently, we can also characterize the performance of classifiers and regressors acting on the measurements as uncertainty in the measurements and reconstructions are both reduced.

\paragraph{Contributions.} This paper provides modeling tools and a range of empirical results that characterize uncertainty in reconstruction and inference with increasing photon budget. 

\begin{itemize}[leftmargin=*]
\item \textit{A model for frame thinning in SPAD measurements.} 
We show that frame thinning—the process of discarding binary measurements—can be modeled as a Markov chain with hypergeometric transitions.
This allows us to convert  large frame count measurements, progressively, into those made with small frame counts, in a gradual process that lends itself well to stable inversion.
\item \textit{Inverse Markov models for SPAD restoration.} We design a reverse Markov process that inverts frame thinning while preserving Binomial marginals, implemented as a variational autoencoder-style conditional sampler.

\item \textit{Empirical validation across multiple applications.}  We evaluate the approach on edge regression, optical character recognition, QR code decoding, and facial reconstruction. For each task, we characterize how the one-to-many ambiguity resolves and how reconstruction and inference accuracy improve with increasing photon budget.
\end{itemize}

Figure \ref{fig:teaser} shows example reconstructions for the four applications we consider, and how the one-to-many nature of the inverse problem manifests itself in each instance. 

\paragraph{Impact.} These contributions provide tools to answer fundamental questions about single-photon imaging. While acquiring more photons intuitively improves results, the minimal photon budget required for reliable reconstruction or inference has remained uncharacterized. 
Our diffusion-based approach provides quantitative answers to these questions.
We further characterize the one-to-many ambiguity by providing tools to quantify reconstruction uncertainty across photon budgets.

%% file: prior.tex
\paragraph{Diffusion models.} Probabilistic diffusion models have emerged as a powerful approach to image generation.
The  premise behind them lies in creating a forward diffusion model that progressively converts samples from a structured distribution like natural images into pure random noise; subsequently, an inverse model reverses the effect of the forward model, one step at a time, to recover the structure of the original data \citep{sohl2015deep}.
\citet{ho2020denoising} showed that such diffusion models could be used to generate high quality images through a learned denoising process.
Subsequently, \citet{song2020score} connected diffusion models to solutions of stochastic differential equations.

While Gaussian noise is the most common corrupting noise used in diffusion models, there has been work focused on models built around Poisson and binomial distributions.
\citet{nachmani2021non} and \citet{xie2023diffusion} expand the traditional models to mixture of Guassian, Gamma distributions, and Poisson distributions arguing that this provides improved degrees of freedoms and improved results in generative modeling.
\citet{chen2023learning} used Poisson distribution for the marginal at each time step, defining a thinning process for the forward process, and a photon aggregation process for the reverse process. 
In the context of denoising single photon emission computed tomography (SPECT) sinograms, that have Poisson statistics, \citet{lai2025poisson} avoid Gaussian diffusion models to build a forward process where a normal dose (or high SNR) SPECT measurement is converted to a low dose one; this is done by removing Poisson random variables progressively (a process that is similar to the frame thinning approach we adopt). 
However, as they note in their work, the reverse model does not have a tractable distribution, which leads them to resort to inverse CDF sampling.
In contrast to these approaches, we use a Binomial process that is perfectly tailored to SPAD measurements; for this distribution, we provide a principled approach to both the forward and the reverse process, carefully ensuring that all distributions involved are Binomial.

\paragraph{Denoising techniques with diffusion models}
Since diffusion models provide a way to model a space of signals, they have been widely used to solve inverse problems  \cite{zhu2023denoising}.
\citet{chung2023diffusion} expand diffusion models, conditioned to solve inverse problems, to include Poisson measurement noise.
\citet{melidonis2025score} expand the noise models beyond Poisson to include Binomial distributions, arguing that photon starved measurement require the use of such models.

Closely related to this paper is the use of diffusion models to denoise SPAD images, in order to remove the effects of photon noise.
Quanta diffusion \citep{chennuri2025quanta} study the use of diffusion models for video reconstruction from SPAD data; they adapt a standard Gaussian diffusion model, conditioned with noisy SPAD data, to recover a video.
gQIR \citep{garg2026gqir} fine-tune a pre-trained diffusion backbone and train it adversarially as a one-step deterministic reconstruction model.
Recent work by \citet{dvir2026diffusion} uses the likelihood model underlying the measurements made by SPAD sensors to derive a score function, which is used with a diffusion model to denoise the measurements.

While the prior work discussed above have explored diffusion models for inverse problems and photon-limited imaging, they have largely ignored the one-to-many nature of the inverse problem.
For the most part, they treat the inverse problem as a one-to-one map, where the mean square error (or variants thereof) are used to guide the recovery, under the regularization provided by the diffusion process.
In contrast, our work tackles the one-to-many nature of the inverse problem head on, developing a forward model based on frame thinning, and using a principled approach to learn the reverse process.

%\paragraph{Computational imaging with SPADs} 

%%passive ultrawideband \cite{wei2023passive}

%% file: overview.tex
In this section, we describe the overview of our work.

\begin{figure}[t]
    \centering
    \includegraphics[width=1\columnwidth]{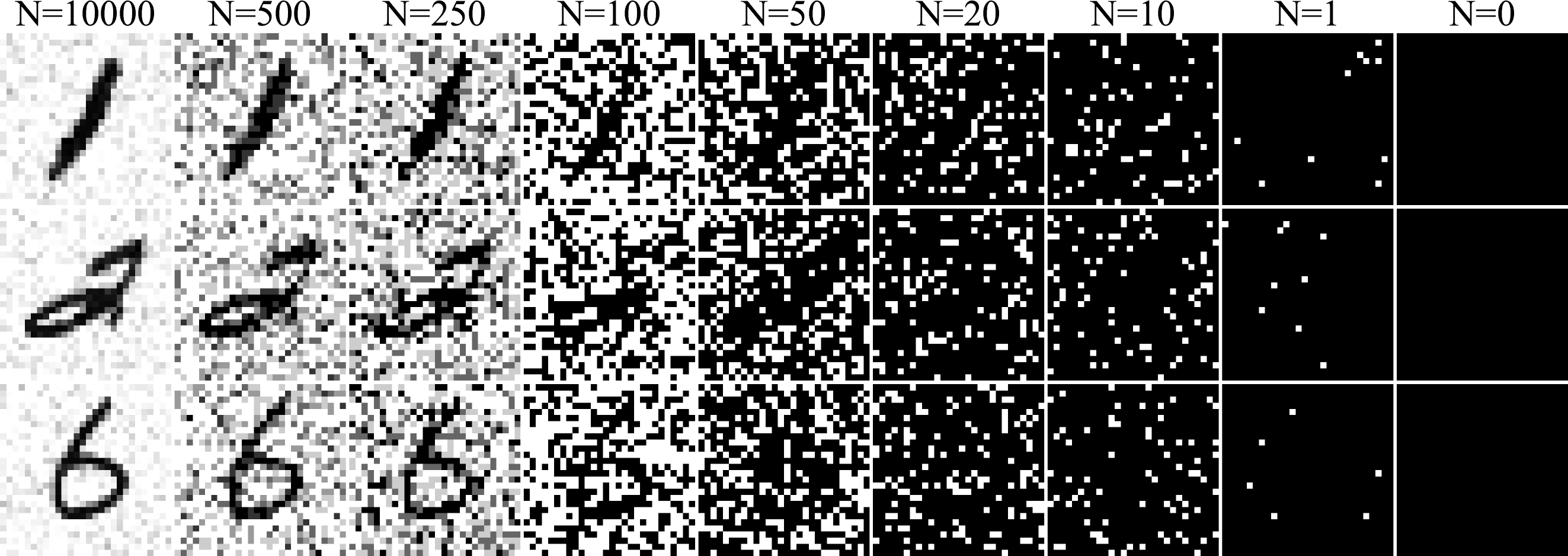}
    \caption{Example MNIST images simulated with the SPAD sensor model. The maximum expected photon level per binary frame is set to $\tau=0.01$.
    }
    \label{fig:mnist_random_spad_grid}
\end{figure}

\subsection{SPAD  Measurement Model}

%We begin  describe how SPAD sensors form an image. In SPAD image sensor arrays, each pixel is operated over a sequence of short binary frames. In each frame, the pixel records whether at least one photon arrives during the gating interval. Thus, each frame produces a binary measurement: \textit{one} if at least one photon is detected and \textit{zero} otherwise.

We begin with a brief description of how SPAD image sensors operate.
Each pixel in a SPAD sensor is a photodiode operated beyond its breakdown voltage; as a result, a single photoelectron triggers an avalanche reaction that saturates the pixel.
In effect, each SPAD pixel registers whether one or more photons arrived during the exposure time.
%
% In effect, each SPAD pixel records a binary event indicating whether at least one photon arrived during the gating interval.
Thus, the SPAD sensor produces a binary image in each exposure, with each pixel reporting zero if no photons arrived and one if one or more photons arrived.

Formalizing this, consider a SPAD sensor with $D$ pixels.
Assuming a static scene that is focused on the sensor, let $\s  =(s_1, \ldots, s_D) \in \mathbb{R}_+^D$  be the mean photon arrival rate for  the sensor pixels, normalized such that $\| \s \|_\infty = 1$, i.e., the brightest scene point is normalized to unit mean arrival rate.
Assuming Poisson-distributed  photon arrivals, the probability of detecting at least one photon at pixel $i$ is 
\begin{equation}
\rho_i = 1 - e^{-\tau s_i},
\qquad
\brho = (\rho_1,\dots,\rho_D) \in (0,1)^D .
\label{eq:rho-def}
\end{equation}
Here, $\tau$ is a scalar that captures the effect of exposure time, quantum efficiency of the sensor, as well as overall scene brightness (since $\s$ is the normalized photon arrival rate).
Thus, the measurement at a pixel $i$ in a single frame is a Bernoulli random variable with success probability $\rho_i$.

Suppose we acquire $N$ such binary frames.
Since  the individual frames are statistically independent, their sum---denoted as $\bX^{(N)}$---is the sufficient statistic for the underlying random variable.
The $i$-th element of $\bX^{(N)}$, denoted as $X^{(N)}_i$, follows a Binomial distribution:
\begin{equation}
X^{(N)}_i \sim \operatorname{Binomial}(N,\rho_i).
\label{eq:spad_pixel_measurement}
\end{equation}
Here, we assume an ideal sensor with negligible dark current.
Figure~\ref{fig:mnist_random_spad_grid} shows MNIST examples simulated using this SPAD measurement model.
The mean and variance of $X^{(N)}_i$ are given by
\begin{equation}
\mathbb{E}[X^{(N)}_i] = N\rho_i,
\qquad
\operatorname{Var}[X^{(N)}_i] = N\rho_i(1-\rho_i).
\end{equation}
Under the definition $\mathrm{SNR}=\mathbb{E}[X]^2/\operatorname{Var}[X]$, this gives
\begin{equation}
\mathrm{SNR}_i
=
\frac{N\rho_i}{1-\rho_i}.
\end{equation}
Therefore, increasing the number of binary frames improves measurement quality. 
As we will see next, the number of measurements we acquire also influences the nature of the inverse problem we will need to solve.
%
%However, if each frame requires exposure time $\Delta t$, then the total acquisition time is $N\Delta t$. 
%
%This creates a fundamental trade-off:  larger $N$ improves the signal-to-noise ratio but reduces temporal resolution and increases motion blur.
%

% Conditioned on the photon-detection probability field $\rho$, $D$ independent SPAD pixel measurements yields a count image
% \begin{equation}
% X^{(N)} \mid \rho
% \;\sim\;
% \prod_{i=1}^D \operatorname{Binomial}\!\big(N,\rho_i\big).
% \label{eq:physical-model}
% \end{equation}
% Thus each pixel count lies in $\{0,1,\dots,N\}$. With an exposure time $\Delta t$ for each binary SPAD frame, the total capture time for single SPAD image is $N \Delta t$.

\subsection{Photon Starvation Makes Restoration One-to-Many}
\label{sec:overview-one-to-many}
We study the problem of estimating the noise-free image formed on the sensor.
From Figure~\ref{fig:mnist_random_spad_grid}, we can immediately observe that this problem is extremely difficult when $N$, the number of binary frames, is small.
In part, this is due to the measurement likelihood being flat over the space of images, i.e., there are a large collection of images which are equally likely to have resulted in the measurements that we observe.
In this regime, the inverse problem  of estimating the noise-free image is intrinsically \textit{one-to-many}. 
For larger $N$, the measurements become inherently more informative and noise-free.
Hence, the conditional distribution,
\begin{equation}
q\!\left(\bX^{(N_1)} \mid \bX^{(N_k)}\right), 
\end{equation}
where $N_1 \gg N_k$, is broad for small $N_k$ and  narrows as $N_k$ increases. 
This narrowing occurs at different rates for different image features.
For example, as seen in Figure~\ref{fig:mnist_random_spad_grid},  digit identity is resolved for small $N$ (say, $N=20$ or $50$), but a high quality reconstruction with pixel level details is only achieved at much larger $N$ (say, $N=1000$ or so).

\begin{figure}[t]
    \centering
    \includegraphics[width=1\columnwidth]{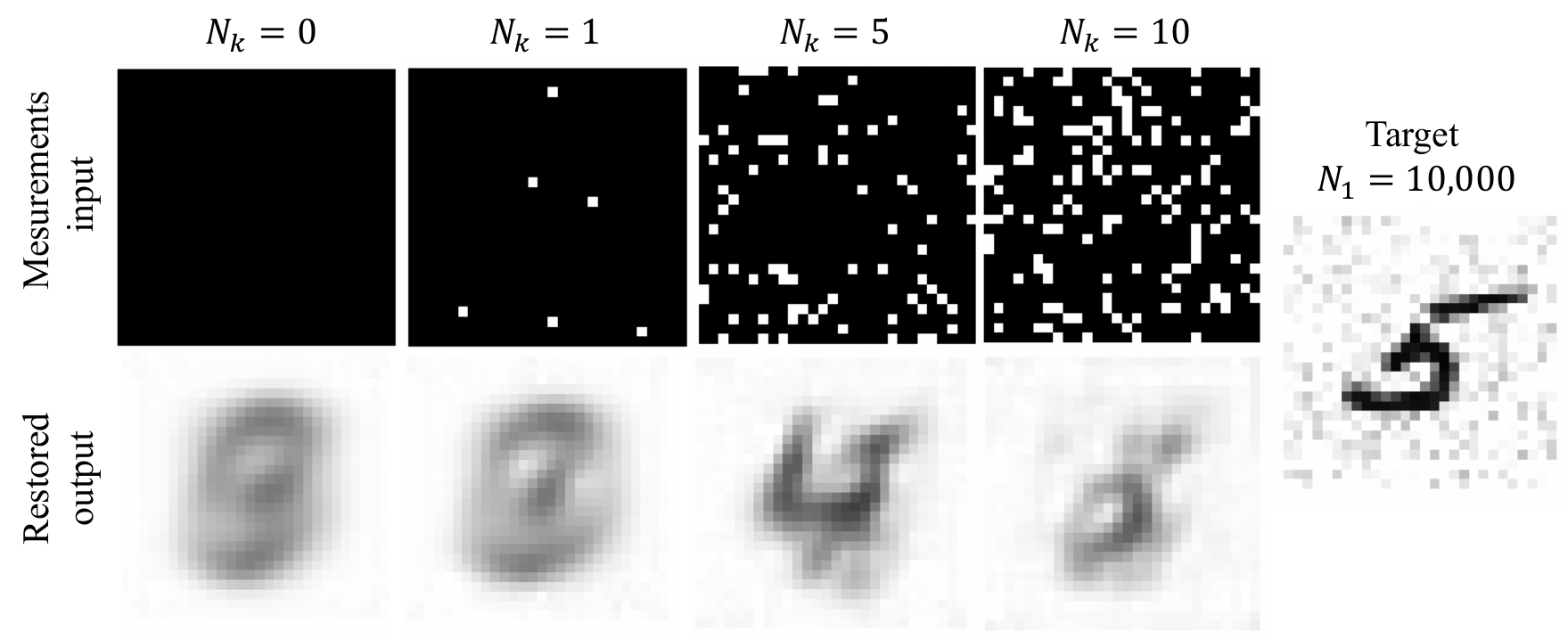}
    \caption{Deterministic restoration under photon starvation. For very small \(N\), the same low-count SPAD observation is compatible with many high-count targets. A deterministic denoiser trained with a L2 reconstruction loss produces a single averaged output, which can be blurry and cannot represent the remaining conditional uncertainty.}
    \label{fig:mnist_determin_recon}
\end{figure}

A deterministic denoiser cannot capture the one-to-many structure endemic to low $N_k$.
%This one-to-many structure is not well represented by a deterministic denoiser.
%
If a deterministic network $f$ is trained with $L_2$ loss to predict $\bX^{(N_1)}$ from $\bX^{(N_k)}$, then the empirical-risk minimizer becomes the conditional mean:
\begin{equation}
f^\star\!\left(\bx_k\right)
=
\mathbb E\!\left[\bX^{(N_1)} \mid \bX^{(N_k)}= \bx_k\right].
\end{equation}

When the conditional distribution is multimodal---comprising of  multiple distinct digits or shapes---the conditional mean is extremely blurred, as illustrated in Figure~\ref{fig:mnist_determin_recon}. 
Since a deterministic network always produces the same output for a given input, it cannot handle multiple plausible reconstructions; instead, the uncertainty only appears as blurriness in the output.

\subsection{Solution Overview}
\label{sec:overview-markovian-conditional-generation}

The failure of deterministic models suggests that we should treat  SPAD restoration, especially in photon-starved regimes,  as  a conditional image generation problem.
Given a fixed low-count measurement $\bX^{(N_k)} = \bx_k$, a conditional generator can produce multiple samples from an approximation to
\begin{equation}
q\!\left(\bX^{(N_1)} \mid \bX^{(N_k)} = \bx_k\right).
\end{equation}
By learning a mapping between the two random variables, we can sample from the  conditional distribution and resolve the one-to-many 
nature of the inverse problem.

%The question we ask in this paper is not only how to reconstruct a clean-looking image from a low-count SPAD measurement, but also how much ambiguity remains after a given number of binary frames. 
%
%When $N_k$ is very small, the observed image \(X^{(N_k)}\) contains little information about the scene, so the conditional distribution of the clean measurement (i.e., $X^{(N_1)} \mid X^{(N_k)}$) is broad and close to unconditional image generation. 
%
%When \(N_k\) is large, the same conditional distribution becomes narrow and close to deterministic reconstruction. 
%
%A direct  approach would be to learn the posterior distribution of the latent scene image with a conditional diffusion model, flow model, or diffusion posterior sampler (e.g., \cite{chung2023diffusion}), and then sample the clean SPAD image with large $N_1$ through the Binomial measurement model. 
%

\begin{figure*}[t]
    \centering
    \includegraphics[width=0.9\textwidth]{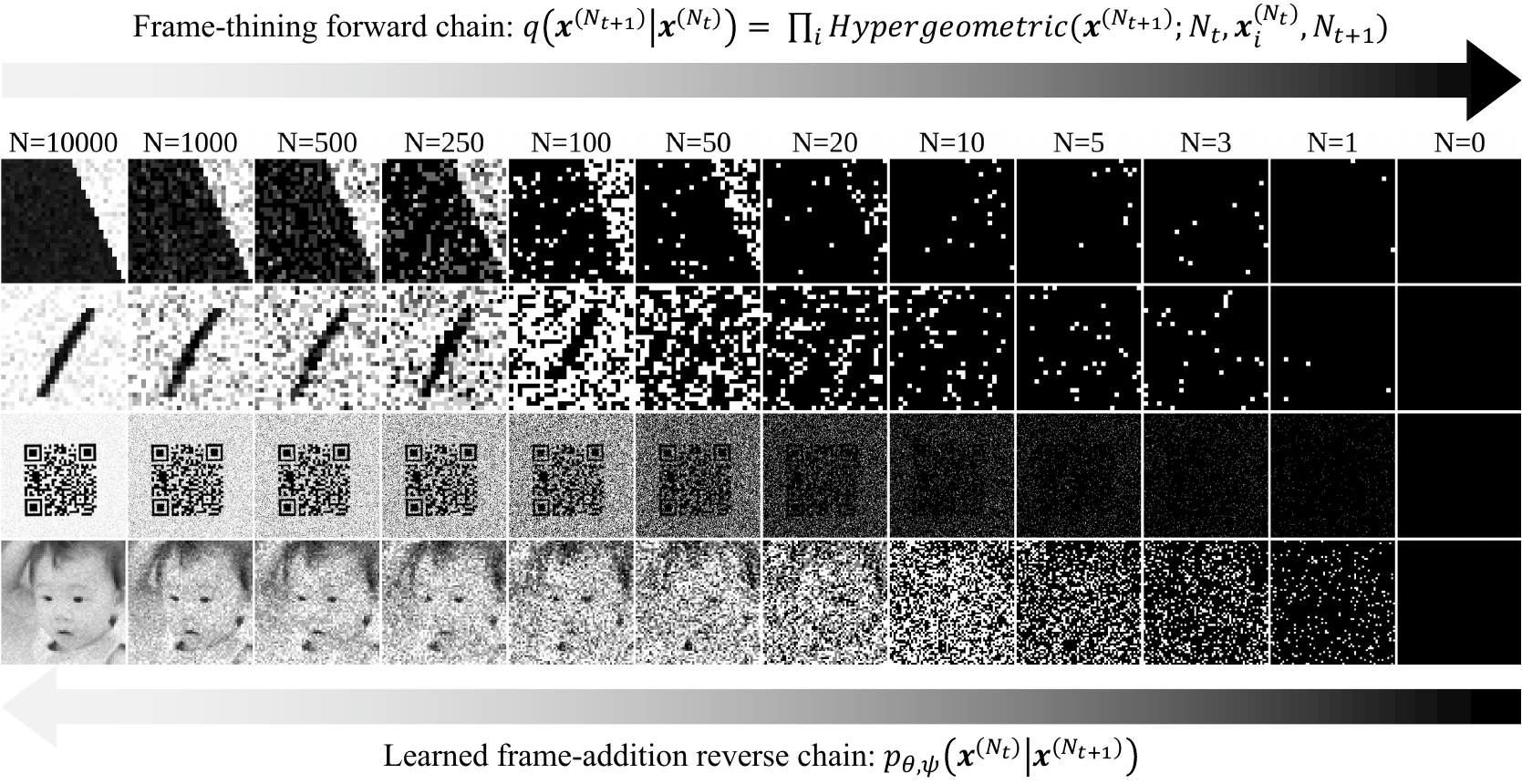}
    \caption{The overall process of conditional generator for SPAD image restoration. }
    \label{fig:conditional_gen_overview}
\end{figure*}

\sloppy
One can learn the conditional generator $q\!\left(\bX^{(N_1)} \mid \bX^{(N_k)} \right)$ in several ways.
Fixing $N_1$ to a  large value, we can train a conditional diffusion model \cite{dhariwal2021diffusion} from scratch or implement a posterior sampler (e.g., \cite{chung2023diffusion}) with a pre-trained diffusion model generate samples from estimated posterior distribution. 
However, since $N_k$ is expected to vary,  training individual generators for each $N_k$ would be computationally prohibitive.
Alternatively,  a single generator would need to span the full range from  unconditional generation ($N_k = 0$) to near-deterministic recovery ($N_k$ approaching $N_1$).

Figures~\ref{fig:mnist_random_spad_grid} and \ref{fig:conditional_gen_overview} suggests a more principled route based on the physics of image formation.
Consider the gamut of images that we observe as the photon budget changes.
At one end, we have photon-starved images and, at the limit, a blank image that has registered no photons at all; at the other end, we have well-exposed and noise-free images where every pixel has gathered enough photons so that its mean vastly outweights its standard deviation.
Traversing this photon ladder resembles the basic idea behind diffusion models: rather than map directly between two distinct distributions, one can learn to move along a sequence of simpler intermediate states \cite{sohl2015deep, ho2020denoising}. 
The difference is that our ladder is not constructed on an artificial Gaussian-noise schedule, but directly emerges from the  SPAD image formation process. 
Specifically, moving downward  the ladder corresponds to discarding binary frames---a frame-thinning operation that defines a \textit{Markov chain} and preserves the Binomial marginal distribution. 
This gives a natural reverse problem in the form of \textit{learning to add frames back.}
Our pairwise reverse model---which learns transitions between frames with consecutive number of measurements---learns this local reverse transition.
As we elaborate in Section \ref{sec:method}, our technique uses a small set of frame-count anchors, in the same  spirit as discrete diffusion models that learn reverse transitions over structured discrete states connected via a Markov chain \cite{austin2021structured}.

This conditional generator also provides an empirical measure of determinism/stochasticity, both in the context of image reconstruction and task-specific inference. 
Given a  measurement $\bX^{(N_k)} = \bx_k$, we can generate multiple samples from our learned approximation to $q\!\left(\bX^{(N_1)} \mid \bX^{(N_k)} =\bx_k \right)$.
The variation among these samples characterizes the breadth of the one-to-many map.
We further examine the stochasticity of these samples through the lens of specific downstream tasks
If the samples disagree in task-relevant ways,  the measurement remains stochastic at the task level. 
If the samples agree,  the measurement has become effectively deterministic. 
This viewpoint lets us study not just when an image can be made visually clean, but also how many binary SPAD frames are needed before downstream tasks become stable.
We study this transition across several applications: (1) regressing parameters of primitive edges from image patches, (2) an optical character recognition (OCR) proxy using the MNIST dataset,  (3) QR code reconstruction,  and (4)  generation of human faces with landmark and pose estimation. 
Across all applications, we vary the input frame count $N_k$ and characterize how many binary SPAD measurements are needed before the downstream decision and image reconstruction becomes stable.

%% file: method.tex
In this section, we describe the proposed method for learning the conditional 
generator, i.e., learn to sample from $q\!\left(\bX^{(N_1)} \mid \bX^{(N_k)} \right)$, where $N_1 \gg N_k$ results in $\bX^{(N_1)}$ being a photon-noise-free image and $\bX^{(N_k)}$ being a highly noisy image.
As mentioned before, our scope transcends the traditional denoising approach taken in prior work, in that, we tackle the one-to-many nature of the inverse problem.
%The goal is to learn the reverse conditional law from a darker measurement \(\bx^{(N_k)}\) to a brighter measurement \(\bx_t\), and eventually to the high-count target \(\bx^{(N_1)}\).

Throughout this section, \(\bX^{(N)}\) denotes a  SPAD image formed from the summation of $N$ binary frames and $\bx^{(N)}$ denotes a realization of this random variable. 
We will also adopt the convention commonly used in diffusion models to define a discrete time index $t$, where smaller values of $t$ correspond to ``cleaner''  images and larger values of $t$ correspond to  ``noisier'' images.
We realize this by linking each value of $t$ to a different frame count $N_t$, and hence $N_1$ takes on a very high value, and $N_t$ progressively decreases in value as $t$ increases,
i.e., $N_t > N_{t+1}.$
Finally, to simplify
 notation a bit,  we write
\[
\bX_t := \bX^{(N_t)}, \textrm{ and } \bx_t:=\bx^{(N_t)},
%\quad \bX_{t+1} := \bX^{(N_{t+1})},
\]
and denote the $i$-th pixel of each as $X_{t, i}$ and $x_{t, i}$, respectively.
%and similarly \(\bx_t:=\bx^{(N_t)}\), \(\bx_{t+1}:=\bx^{(N_{t+1})}\). 

\subsection{Markovian Frame-Thinning Forward Chain}

The SPAD measurement process naturally defines a ladder of images indexed by the number of accumulated binary frames.
Let us consider two consecutive images $\bx_t$ and $\bx_{t+1}$ on this ladder corresponding to frame counts $N_t$ and $N_{t+1}$, respectively.
Since \(N_t>N_{t+1}\), frame thinning corresponds to uniformly retaining \(N_{t+1}\) of the \(N_t\) underlying binary frames. The transition is conditioned only on the count image \(\bx_t\), not on the full binary-frame stack. At each pixel, the retained count follows a hypergeometric distribution:
\begin{equation}
\bX_{t+1,i}\mid \bX_{t,i} = \bx_{t,i}
\sim
\operatorname{Hypergeometric}(N_t,x_{t,i},N_{t+1}).
\end{equation}
This distribution describes the number of detections retained when sampling \(N_{t+1}\) binary outcomes without replacement from \(N_t\) outcomes containing \(x_{t,i}\) detections.
We assume that detections are independent across frames and pixels given \(\brho\). Applying this model across the entire image of \(D\) pixels, we get:
\begin{equation}
q(\bx_{t+1}\mid\bx_t)
=
\prod_{i=1}^D
\operatorname{Hypergeometric}(x_{t+1,i};N_t,x_{t,i},N_{t+1}).
\label{eq:forward-thinning-kernel}
\end{equation}

For completeness, we can also establish that the frame thinning process as described above and modeled by the hypergeometric distribution provides marginal distributions for $\bX_{t+1}$ that are Binomial at each pixel conditional on \(\brho\), as we would expect from a physical realization of the random variable based on Eq.~\eqref{eq:spad_pixel_measurement}.

\begin{theorem}[SPAD frame thinning is a Markov chain]
\label{thm:frame-thinning-markov}
Let
\[
N_1>N_2>\cdots>N_t>N_{t+1}>\cdots>N_L
\]
be a decreasing sequence of frame counts.
Successively applying the thinning transition
\[
q(\bx_{t+1}\mid\bx_t)
=
\prod_{i=1}^D
\operatorname{Hypergeometric}
\!\left(x_{t+1,i};N_t,x_{t,i},N_{t+1}\right)
\]
defines a Markov chain. Moreover, if
\[
\bX_t\mid\brho
\sim
\prod_i\operatorname{Binomial}(N_t,\rho_i),
\]
then the random variable after this thinning transition also follows a Binomial distribution of the form
\begin{equation}
\bX_{t+1}\mid\brho
\sim
\prod_i\operatorname{Binomial}(N_{t+1},\rho_i).
\label{eq:forward-thinning-binomial}
\end{equation}
\end{theorem}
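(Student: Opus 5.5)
The proof has two parts: the Markov property, which is essentially by construction, and preservation of the Binomial marginal, which is a pixelwise convolution identity. For the Markov property, I would define the joint law of $(\bX_1,\dots,\bX_L)$ given $\brho$ as
\[
p(\bx_1,\dots,\bx_L\mid\brho)=p(\bx_1\mid\brho)\prod_{t=1}^{L-1}q(\bx_{t+1}\mid\bx_t).
\]
Each kernel $q(\cdot\mid\bx_t)$ depends only on $\bx_t$ and the deterministic frame counts $N_t,N_{t+1}$, not on $\bx_1,\dots,\bx_{t-1}$. Hence $p(\bx_{t+1}\mid\bx_1,\dots,\bx_t)=q(\bx_{t+1}\mid\bx_t)$, which is the Markov property. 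I would also note that the kernel is a valid probability mass function on $\{0,\dots,N_{t+1}\}^D$ because $x_{t,i}\le N_t$ and $N_{t+1}<N_t$.

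To connect the construction to physics, I would also check that it agrees with discarding frames: given the binary stack, the frames are i.i.d.\ Bernoulli($\rho_i$). Conditioned on the count $x_{t,i}$, every arrangement of the $x_{t,i}$ ones among the $N_t$ frames is equally likely. A uniformly chosen subset of $N_{t+1}$ frames therefore contains a hypergeometric number of ones. This shows that conditioning on the count image suffices.

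For the Binomial marginal, both the prior and the kernel factor over pixels, so it suffices to fix one pixel. Write $n=N_t$, $m=N_{t+1}$, $\rho=\rho_i$, and marginalize over $x=x_{t,i}$ for fixed $y=x_{t+1,i}$:
\[
P(y)=\sum_{x}\binom{n}{x}\rho^x(1-\rho)^{n-x}\,\frac{\binom{x}{y}\binom{n-x}{m-y}}{\binom{n}{m}}.
\]
The key step is the identity
\[
\binom{n}{x}\binom{x}{y}\binom{n-x}{m-y}=\binom{n}{m}\binom{m}{y}\binom{n-m}{x-y},
\]
which holds because both sides equal $n!/\bigl(y!\,(x-y)!\,(m-y)!\,(n-m-x+y)!\bigr)$. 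Substituting it gives
\[
P(y)=\binom{m}{y}\rho^y(1-\rho)^{m-y}\sum_{j}\binom{n-m}{j}\rho^{j}(1-\rho)^{n-m-j},
\]
where $j=x-y$. The remaining sum equals $1$ by the binomial theorem, so $X_{t+1,i}\mid\brho\sim\operatorname{Binomial}(m,\rho)$.

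Independence across pixels is preserved because both the prior and the kernel factor over $i$. This yields Eq.~\eqref{eq:forward-thinning-binomial}, and induction on $t$ extends it down the whole chain. The only real obstacle is the factorial rearrangement together with the bookkeeping of summation ranges: the sum runs over $y\le x\le n-m+y$. That range is handled automatically by the convention that a binomial coefficient vanishes outside its support.
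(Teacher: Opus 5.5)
Your proposal is correct and follows essentially the same route as the paper. The Markov property comes from writing the path law as $q(\bx_1\mid\brho)\prod_t K_t(\bx_{t+1}\mid\bx_t)$, and Binomial preservation is a pixelwise marginalization via a binomial-coefficient rearrangement followed by the binomial theorem. The only cosmetic difference is that you fold the paper's two successive identities, and its final simplification to $\binom{N_{t+1}}{m}$, into the single factorial identity $\binom{n}{x}\binom{x}{y}\binom{n-x}{m-y}=\binom{n}{m}\binom{m}{y}\binom{n-m}{x-y}$, which you verify correctly.
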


The proof is given in the Appendix. The main point is simple: uniformly discarding frames preserves the physical Binomial marginal at the new frame count. Therefore, the resulting sequence of SPAD measurements at different \(N\) values is a sensor-defined Markov chain.

Importantly, the reverse of this Markov chain is also Markovian.
In short, our goal is to learn the reverse transition of the Markov chain to add frames back for conditional generation of the high-count image.

\subsection{Conditional Structure of the True Reverse Chain}
The reverse direction corresponds to the addition of missing binary frames. 
For \(N_t>N_{t+1}\), define
\begin{equation}
M_t := N_t-N_{t+1}.
%\qquad
%\Delta \bx_t := \bx_t-\bx_{t+1}.
\end{equation}
If the photon-detection probability  \(\brho\) were known, then the missing frames would be independent Bernoulli trials, so
\begin{equation}
\bX_t
\,=\,
\bX_{t+1} + \bY_t,
\ \ 
Y_{t,i} \mid \brho \sim \operatorname{Binomial}\!\big(M_t,\rho_i\big)
\text{ independently over $i$}.
\end{equation}
%Equivalently,
%\begin{equation}
%q(\bx_t\mid \bx_{t+1},\brho,N_t,N_{t+1})
%=
%\prod_{i=1}^D
%\operatorname{Binomial}(\Delta x_{t,i};M_t,\rho_i).
%\label{eq:true-pairwise-oracle}
%\end{equation}
In practice, \(\brho\) is unknown. Let
\[
\nu_{\mathrm{true}}(\brho\mid \bx_{t+1})
\]
denote the true posterior distribution of \(\brho\) after observing \(\bx_{t+1}\). Marginalizing \(\brho\) gives the exact reverse kernel:
\begin{equation}
\begin{aligned}
&q(\bx_t\mid \bx_{t+1}) \\ %; N_t,N_{t+1}) \\
& =
\int\!\!\left[
\prod_{i=1}^D
\operatorname{Binomial}(x_{t,i}-x_{t+1, i};M_t,\rho_i)
\right]\!
\nu_{\mathrm{true}}(\brho\mid \bx_{t+1})d\brho.
\end{aligned}
\label{eq:true-pairwise-mixture}
\end{equation}
%
%Equation~\eqref{eq:true-pairwise-mixture} is the key structure. 
%
Effectively, the true reverse kernel is the \textit{product of Binomial distributions}, integrated over the posterior of the hidden photon-probability image \(\brho\). 
Therefore, a learned reverse model should represent the uncertainty over \(\brho\), while keeping the final count-generation step Binomial.

\subsection{Latent Binomial Model}
\label{sec:latent-binomial-model}

Equation~\eqref{eq:true-pairwise-mixture} motivates a  reverse model with two parts: 
first, a latent variable that estimates $\nu_{\mathrm{true}}(\brho\mid \bX_{t+1})$, which  represents the remaining uncertainty on the photon-detection probability map $\brho$ conditioned on observing \(\bX_{t+1} = \bx_{t+1}\) and, second, a Binomial sampler that increases the frame count from $N_{t+1}$ to $N_t$.
%
%The latent model is not the only possible way to represent this mixture. For example, one could learn a conditional diffusion or flow model over the photon-probability field \(\brho\), which is just an exponential transformation of the image. 
%
Based on this, we train a reverse predictive kernel of the form:
\begin{equation}
p_{\theta,\psi}(\bX_t\mid \bX_{t+1} = \bx_{t+1}; N_t,N_{t+1}),
\end{equation}
%directly from SPAD count pairs.
where $\theta$ and $\psi$ denote trainable parameters of models that we describe below.

\subsubsection{The Reverse Kernel} 
The reverse kernel comprises two sampling steps built upon a prior encoder whose parameters are $\psi$ and a decoder with parameters $\theta$.
% For one reverse step, recall that
% \[
% M_t := N_t-N_{t+1},
% \qquad
% \Delta \bx_t := \bx_t-\bx_{t+1}.
% \]
The prior encoder takes as input $\bx_{t+1}$ along with the frame counts $N_{t+1}$ and $N_t$ to predict the parameters of the distribution of a latent variable $\bZ$.
Specifically, the encoder allows us to sample a latent vector $\bz \sim \bZ$ using a network with trainable parameters $\psi$:
\begin{equation}
\bz\sim p_\psi(\cdot \mid \bX_{t+1} = \bx_{t+1}; N_t,N_{t+1}).
\end{equation}
We will discuss the construction of $p_\psi$ below.
Given \(\bz\), the decoder predicts the photon-detection probability map as follows:
\begin{equation}
\bu = G_\theta(\bx_{t+1},N_t,N_{t+1},\bz),
\qquad
\brho_\theta=\sigma(\bu),
\end{equation}
where $\bu$ are the output logits of the network, and applying the sigmoid operator $\sigma$ gives the photon map $\brho_\theta$.
With the predicted photon-detection probability map \(\brho_\theta\), we reverse the SPAD frame thinning process by adding back \(M_t\) independent Bernoulli trials with probability \(\brho_\theta\):
\begin{equation}
    \bX_t
    \,=\,
    \bX_{t+1} + \widehat{\bY}_{t},
    \quad
    \widehat{Y}_{t,i} \mid \brho_\theta \sim \operatorname{Binomial}\!\big(M_t,\rho_{\theta,i}\big).
    \end{equation}
The conditional reverse kernel given \(\bz\) is
\begin{equation}
p_\theta(\bx_t\mid \bx_{t+1}, \bz)
=
\prod_{i=1}^D
\operatorname{Binomial}( x_{t,i}-x_{t+1, i};\ M_t,\rho_{\theta,i}).
\label{eq:learned-pairwise-kernel}
\end{equation}
Marginalizing the latent variable gives
\begin{equation}
\begin{aligned}
&p_{\theta,\psi}(\bx_t\mid \bx_{t+1}; N_t, N_{t+1}) =
\int
p_\theta(\bx_t \mid \bx_{t+1}, \bz)
\,p_\psi(\bz\mid \bx_{t+1})d\bz.
\end{aligned}
\label{eq:learned-pairwise-marginal}
\end{equation}
Thus the learned reverse kernel is also a mixture of product Binomial distributions. The important point is that the model is trained to approximate the reverse predictive distribution of \(\bX_t\), not necessarily to recover the full posterior distribution of \(\brho\) exactly.

% \paragraph{Implementation inputs.}
% In the implementation, the neural networks do not use raw frame counts directly. They use normalized images and frame-count features, for example
% \[
% r_{t+1}
% :=
% \frac{x_{t+1}}{\operatorname{safe}(N_{t+1})},
% \qquad
% \eta_{t+1}
% :=
% \log\!\left(\frac{\operatorname{safe}(N_{t+1})}{N_1}\right),
% \]
% \[
% \eta_t
% :=
% \log\!\left(\frac{N_t}{N_1}\right),
% \qquad
% \gamma_t
% :=
% \frac{N_t-N_{t+1}}{N_1}.
% \]
% We write \(G_\theta(x_{t+1},N_t,N_{t+1},z)\) for compactness.

%\subsubsection{Why We Train the Reverse Kernel Likelihood}
\subsubsection{Why not a direct approach?}
It is worth considering direct approaches to sample from the conditional distribution $q(\bX_{1} | \bX_{t})$. 
Variational autoencoders (VAEs) \citep{kingma2013auto} and generative adversarial networks (GANs) \cite{goodfellow2014generative} perform such sampling, but each suffers from  limitations in the form of blurred generations in the case of VAEs or stale ones in the case of GANs.

An alternative approach  would try to learn the posterior distribution
\[
\nu_{\mathrm{true}}(\brho\mid \bX_{t+1}; N_{t+1})
\]
and then sample the SPAD increment accordingly. This is a valid approach, but it requires the generator to model a broad family of clean probability-map distributions: for very small \(N_{t+1}\), the task is close to unconditional image generation, while for large \(N_{t+1}\), the task is close to deterministic reconstruction. 

Our model instead targets the local reverse kernel
\[
q(\bX_t\mid \bX_{t+1}; N_t,N_{t+1}).
\]
This is the distribution that we actually need for Markov-chain reversion. For a local step, the target is only the missing-frame increment, and not the full clean probability map. Therefore, the model can learn the aspects of the hidden photon-probability image that matter just for the next SPAD count transition.

The natural optimization objective  is
\begin{equation}
\max_{\theta,\psi}
\log p_{\theta,\psi}(\bx_t\mid \bx_{t+1}; N_t,N_{t+1}).
\label{eq:pairwise-likelihood-objective}
\end{equation}
However, the marginal likelihood in~Eq.\ \eqref{eq:learned-pairwise-marginal} contains an integral over \(\bz\). This integral is generally not available in closed form due to the complex form of the neural decoder. We therefore use a variational training objective.

\subsubsection{Variational Training with a Posterior Encoder}
We follow the standard VAE \cite{kingma2013auto} and conditional VAE \cite{sohn2015learning} approach: a prior network is used at sampling time, and a posterior encoder is used only during training to provide useful latent samples. A schematic of the VAE architecture is shown in Figure \ref{fig:latent_binomial_model}.

We parameterize the prior as a Gaussian with diagonal covariance matrix,
\begin{equation}
\begin{aligned}
(\bmu^p,\bell^p)
&=
E^{\mathrm{prior}}_\psi(\bx_{t+1},N_t,N_{t+1}),\\
p_\psi(\bz \mid \bx_{t+1},N_t,N_{t+1})
&=
\mathcal N\!\left(\bz ; \bmu^p,\operatorname{diag}(e^{\bell^p})\right).
\end{aligned}
\label{eq:pairwise-prior}
\end{equation}
During training, the posterior encoder also sees the target through the increment image:
\begin{equation}
\bd_t := \frac{\bx_t-\bx_{t+1}}{N_t-N_{t+1}}.
\label{eq:pairwise-increment-feature}
\end{equation}
It predicts
\begin{equation}
\begin{aligned}
(\bmu^q,\bell^q)
&=
E^{\mathrm{post}}_\phi(\bx_{t+1},\bd_t,N_t,N_{t+1}),\\
q_\phi(\bz\mid \bx_t,\bx_{t+1},N_t,N_{t+1})
&=
\mathcal N\!\left(\bz;\bmu^q,\operatorname{diag}(e^{\bell^q})\right).
\end{aligned}
\label{eq:pairwise-posterior}
\end{equation}

\begin{figure}[t]
    \centering
    \includegraphics[width=1\columnwidth]{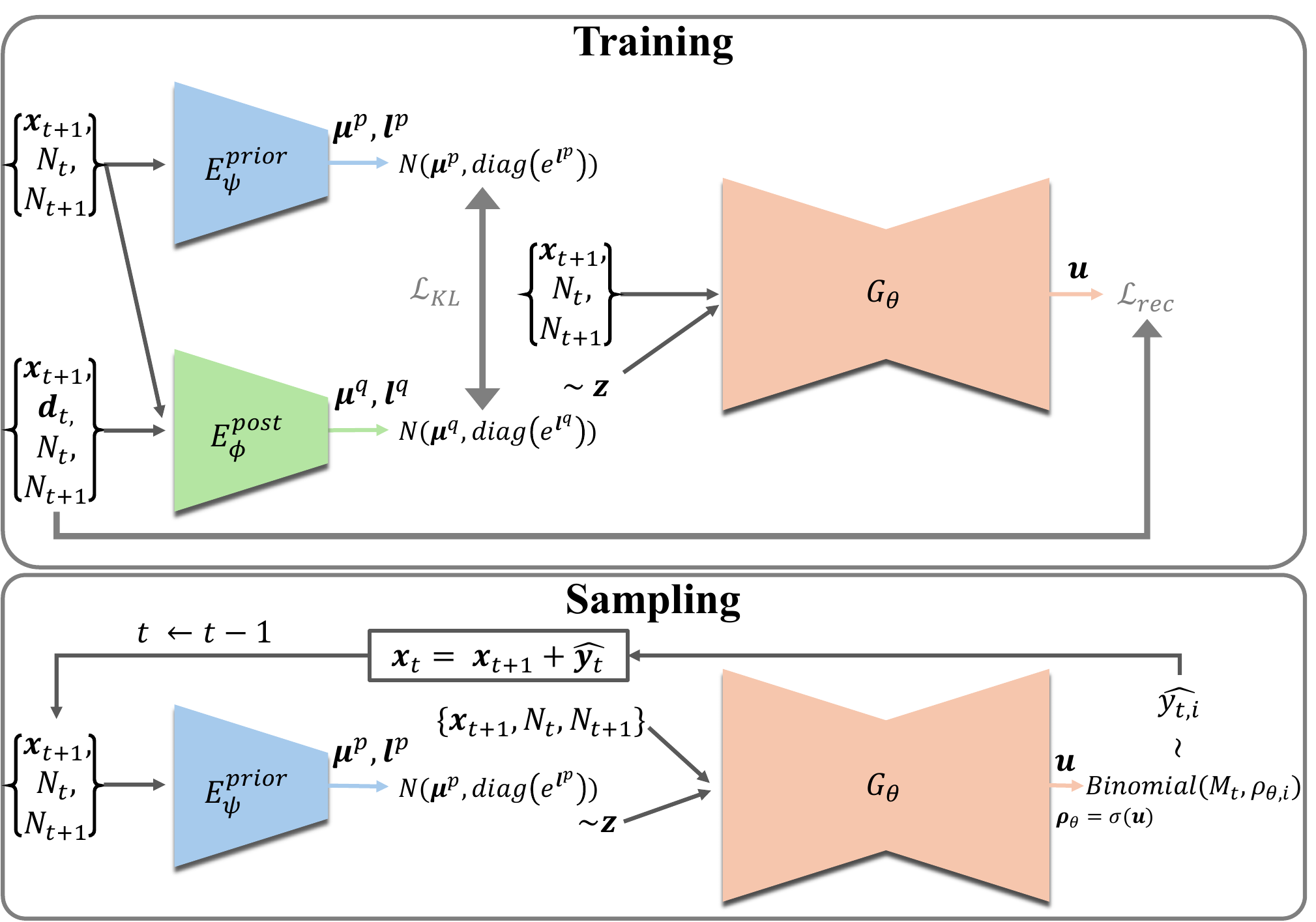}
    \caption{Overview of the proposed latent binomial model for a single step. The prior encoder takes in the source SPAD image \(\bx_{t+1}\) and the frame counts \(N_t\) and \(N_{t+1}\) to predict the parameters of the Gaussian prior distribution of the latent variable \(\bz\). The posterior encoder takes in the source SPAD image \(\bx_{t+1}\), the increment image \(\bd_t\), and the frame counts \(N_t\) and \(N_{t+1}\) to predict the parameters of the Gaussian posterior distribution. The decoder with a U-Net architecture takes in \(\bx_{t+1}\), the frame counts \(N_t\) and \(N_{t+1}\), and \(\bz\) to predict the photon-detection probability map \(\brho_\theta\). During sampling, \(\bz\) is drawn from the prior, \(\widehat{y}_{t,i}\sim\operatorname{Binomial}(M_t,\rho_{\theta,i})\), and \(\bx_t=\bx_{t+1}+\widehat{\bm{y}}_t\). Here \(\bz\) denotes the set of latent groups described in Section~\ref{sec:latent-binomial-model}.}
    \label{fig:latent_binomial_model}
\end{figure}

The pairwise evidence lower bound is
\begin{equation}
\log p_{\theta,\psi}(\bx_t\mid \bx_{t+1},N_t,N_{t+1})
\ge
\mathcal L_{\mathrm{pair}}(\bx_t,\bx_{t+1}),
\label{eq:pairwise-elbo-bound-method}
\end{equation}
where
\begin{equation}
\begin{aligned}
& \mathcal L_{\mathrm{pair}}
={}
\mathbb E_{q_\phi}
\left[
\log p_\theta(\bx_t\mid \bx_{t+1},N_t,N_{t+1},\bz)
\right]
\\
&
\quad
-\operatorname{KL}
\left(
q_\phi(\bz\mid \bx_t,\bx_{t+1},N_t,N_{t+1})
\,\|\, 
p_\psi(\bz\mid \bx_{t+1},N_t,N_{t+1})
\right).
\end{aligned}
\label{eq:pairwise-elbo-method}
\end{equation}
The posterior encoder helps training because it can use \(\bx_t\) to infer latent variables that explain the observed target. The KL term keeps this training-time posterior close to the prior that will be used at test time.

\paragraph{Hierarchical latent implementation.}
For clarity, the equations use one symbol \(\bz\), but in practice we implement a hierarchical latent structure following the general direction of hierarchical VAEs \cite{sonderby2016ladder, vahdat2020nvae, child2020very}. The \(\bz\) symbol now denotes a collection of latent groups, such as
\[
\bz=(\bz_g,\bz_{s_1},\bz_{s_2}),
\]
where \(\bz_g\) is a global latent and \(\bz_{s_1},\bz_{s_2}\) are spatial latent maps at decoder resolutions \(s_1\) and \(s_2\). This lets the model represent uncertainty at multiple spatial scales. The prior and posterior encoders predict the parameters of \(\bz_g\) only. The prior and posterior parameters of each spatial latent are predicted inside the decoder from its features at that resolution, and the posterior also uses the posterior-encoder features at the same resolution. Hence, each spatial latent is conditioned on \(\bz_g\) and on the spatial latents at coarser resolutions.
% : the global latent captures coarse scene-level ambiguity, while spatial latents capture local structure. 
The ELBO has the same form after reading \(\bz\) as this compound latent variable. The KL term is computed as the sum of groupwise Gaussian KL terms, each using the same diagonal-Gaussian formula. 
More technical details are provided in Appendix.

\subsubsection{Practical Pairwise Loss}

Using the Binomial logit form and dropping the combinatorial term that does not depend on the model parameters, the reconstruction loss is simplified as
\begin{equation}
\mathcal J_{\mathrm{rec}}(\bx_t,\bx_{t+1})
=
\mathbb E_{\beps}
\left[
\sum_{i=1}^D
\left(
M_t\operatorname{softplus}(u_i)
-
\Delta x_{t,i}u_i
\right)
\right],
\label{eq:pairwise-rec-loss-method}
\end{equation}
where \(\Delta \bx_t := \bx_t-\bx_{t+1}\) and
\[
\beps\sim\mathcal N(\bm{0},\bm{I}),
\qquad
\bz=\bmu^q+\exp\!\left(\tfrac12\bell^q\right)\odot\beps .
\]
The factor \(1/2\) appears because \(\bell^q\) is the log-variance, so
\[
\exp\!\left(\tfrac12\bell^q\right)
\]
is the standard deviation. In practice, this expectation is estimated with one posterior latent sample per training example.

The Gaussian KL term has a closed form. For a single latent group,
\begin{equation}
\mathcal J_{\mathrm{KL}}
=
\frac12
\sum_r
\left[
\ell_r^p-\ell_r^q-1
+
e^{\ell_r^q-\ell_r^p}
+
(\mu_r^q-\mu_r^p)^2e^{-\ell_r^p}
\right].
\label{eq:pairwise-gaussian-kl}
\end{equation}
% For hierarchical latents, this expression is summed over all coordinates of each latent group and then over groups. 
The practical negative ELBO is
\begin{equation}
\mathcal J_{\mathrm{pair}}
=
\mathcal J_{\mathrm{rec}}
+
\mathcal J_{\mathrm{KL}}.
\label{eq:pairwise-practical-loss}
\end{equation}
For hierarchical latents, the KL term is decomposed into groups,
\[
\mathcal J_{\mathrm{KL}}
=
\sum_{g\in\mathcal G}\mathcal J_{\mathrm{KL},g},
\]
where \(\mathcal G\) indexes the global and spatial latent groups. In practice, we use groupwise KL weights,
\begin{equation}
\mathcal J_{\mathrm{pair}}
=
\mathcal J_{\mathrm{rec}}
+
\sum_{g\in\mathcal G}
\beta_g(s)\,
\mathcal J_{\mathrm{KL},g},
\label{eq:pairwise-beta-kl-loss}
\end{equation}
where \(s\) is the optimizer step. The weights \(\beta_g(s)\) are increased during training to avoid suppressing the latent variables too early, a common issue in VAE training \cite{sonderby2016train}. Before the sum over \(r\) in~\eqref{eq:pairwise-gaussian-kl}, each coordinate term is floored at a constant \(\lambda\) (free bits). 
% This is similar in spirit to KL warm-up and KL annealing methods used for training deep or conditional VAEs.

When the training set contains transitions with very different frame gaps \(M_t\), we also use a gap-normalized reconstruction term:
\begin{equation}
\widetilde{\mathcal J}_{\mathrm{pair}}
=
\frac{1}{M_t}\mathcal J_{\mathrm{rec}}
+
\sum_{g\in\mathcal G}
\beta_g(s)\,
\mathcal J_{\mathrm{KL},g}.
\label{eq:gap-normalized-method}
\end{equation}
This normalization only reweights the reconstruction term across transitions of different sizes. It does not change the generative model in~\eqref{eq:learned-pairwise-kernel}. Its role is to prevent large frame-gap transitions from dominating the minibatch loss.
More details and proofs are provided in the Appendix.

\subsection{Anchor-point Markov Reverse Chain for Arbitrary Number of Measurements}
\label{sec:anchor-point-reverse-chain}

To reverse the Markovian forward chain, the ideal approach would learn every adjacent reverse kernel (i.e., single binary frame addition) in the ladder, i.e., have $N_{t} = N_{t+1} + 1$.
In practice, using many small steps would dramatically slow the generation process, as we 
expect to accumulate thousands if not tens of thousands of SPAD frames before the effect of photon noise is minuscule. 
%especially an issue for our purpose of measuring the transition of measurement determinism.
We therefore use a short Markov reverse chain defined by a small set of anchor frame counts.
Let
\[
\mathcal{A} =\{A_1=N_1, A_2,\dots, A_L\},\]
be a list of monotonically decreasing integers that define the values that $N_t$ take.
For any integer $N_k \in [0, N_1)$, we can define the next anchor
\begin{equation}
T(N_k)=\min\{A \in\mathcal A \textrm{ s.t. } A > N_k\}.
\label{eq:generic-next-anchor-method}
\end{equation}
Hence, if we start with a  SPAD image measured with $N_k$ binary frames, then we set a reverse chain with frame counts
\[ N_k \rightarrow T(N_k) \rightarrow T(T(N_k)) \rightarrow \cdots \rightarrow  N_1.\]
This allows us to progressively sample from the pairwise reverse kernel to transform low frame count SPAD measurements into high frame count ones.
%Starting from a measured SPAD image $\bX_{t+1} = \bx_{t+1}$ for a frame count $N_{t+1} \in \mathcal{N}$,
% Let
% \[
% \mathcal A=\{A_1,A_2,\dots,A_L\},
% \qquad
% A_1=N_1,
% \]
% be a monotically decreasing list of  integers, that define the values that $N_t$ take.
% %
% Given a current frame count \(n > N_1\), define the next anchor by
% \begin{equation}
% T(n)=\max\{A_r\in\mathcal A:A_r < n\}.
% \label{eq:generic-next-anchor-method}
% \end{equation}
% Starting from a measured SPAD image
% \[
% (\bx_L,N_L = A_L)
% \]
% we define the reverse frame-count path by
% \[
% N_{t} = T(N_{t+1}).
% \]
At each step, we sample from the shared pairwise reverse kernel:
\begin{equation}
\bx_{t}
\sim
p_{\theta,\psi}
\left(
\cdot
\mid
\bx_{t+1};  N_{t}, N_{t+1}
\right).
\label{eq:anchor-sampling-step}
\end{equation}
Repeating the process, we trace out SPAD measurements with frame count in $\mathcal{A}$ until we reach $N_1$, and the final state is the generated high-count SPAD image.

This anchor chain is a compressed version of the full Markov reverse process. The anchors are chosen to balance two goals. Smaller steps make each local reverse problem closer to the physical frame-addition transition, while fewer steps keep sampling efficient and reduce the accumulation of local errors. The same pairwise model is trained and shared across all anchor transitions, so the method does not require a separate network for each edge of the chain.

%\subsection{Algorithm}

We summarize the algorithm for training the latent Binomial model in Algorithm~\ref{alg:pairwise-training-method}, and for sampling in Algorithm~\ref{alg:pairwise-sampling-method}.
%In this section, we provide the detailed algorithm for training the latent Binomial model with the pairwise loss in Equation~\eqref{eq:gap-normalized-method}, and sampling by anchor-point Markov reversion described the previous section.

\begin{algorithm}[t]
\caption{Training the shared pairwise SPAD reverse model}
\label{alg:pairwise-training-method}
\begin{algorithmic}[1]
\Require Clean-image dataset; target frame count \(N_1\); anchor frame counts $\mathcal{A}$
\Repeat
    \State Sample a clean image \(\s_0\)
    \State Compute \(\brho=1-\exp(-\tau \s_0)\)
    \State Sample $N_k$ from $[0, \ldots, N_1)$
    \State Set source frame count: \(N_{t+1} \leftarrow N_k\) 
    \State Set target frame count: $N_t \leftarrow T(N_k)$ as defined in Eq.\ \eqref{eq:generic-next-anchor-method}
    \State Sample
    \begin{align*}
    \bx_{t} &\sim \prod_i\operatorname{Binomial}(N_{t},\rho_i), \\
    % \Delta \bx_t &\sim \prod_i\operatorname{Binomial}(N_t-N_{t+1},\rho_i)
    \bx_{t+1} &\sim \prod_i\operatorname{Hypergeometric}(N_t,x_{t,i},N_{t+1})
    \end{align*}
    % \State Set \(\bx_t=\bx_{t+1}+\Delta \bx_t\)
    % \State Form \(r_{t+1},\bd_t,\eta_{t+1},\eta_t,\gamma_t\)
    \State Predict prior and posterior Gaussian parameters using Eq.\ \eqref{eq:pairwise-prior} and \eqref{eq:pairwise-posterior}.
    \State Sample \(\bz\sim q_\phi(\cdot\mid \bx_t,\bx_{t+1},N_t,N_{t+1})\)
    \State Decode logits \(\bu\) and compute the pairwise loss in Eq.\ \eqref{eq:gap-normalized-method}
    \State Update \((\theta,\psi,\phi)\)
\Until{convergence}
\end{algorithmic}
\end{algorithm}

\begin{algorithm}[t]
\caption{Sampling by anchor-point Markov reversion}
\label{alg:pairwise-sampling-method}
\begin{algorithmic}[1]
\Require Input SPAD image \(\bx_{inp} \) corresponding to input frame count \(N_k\), Anchor frame counts $\mathcal{A}$
\State Determine $t$ such that the $t$-th element of $\mathcal{A}$ is $T(N_k)$
\State $\bx_{t+1} \leftarrow \bx_{inp}$
\State $N_{t+1} \leftarrow N_k$
%\State Set \(\bx^{(0)}\leftarrow \bx^{(N_k)}\), \(N^{(0)}\leftarrow N_k\), \(\ell\leftarrow0\)
\While{$t > 0$}
%\While{\(N^{(\ell)}<N_1\)}
%    \State Set \(N^{(\ell+1)}\leftarrow T(N^{(\ell)})\)
    \State $N_t = T(N_{t+1})$
    \State  \(M_t \leftarrow N_t - N_{t+1} \)
    \State Sample latent variable
    \[
    \bz
    \sim
    p_\psi
    \left(
    \cdot
    \mid
    \bx_{t+1},N_{t}, N_{t+1}
    \right)
    \]
    \State Decode logits and probability map
    \[
    \bu
    \leftarrow
    G_\theta\!\left(\bx_{t+1},N_t, N_{t+1},\bz\right),
    \qquad
    \brho\leftarrow \sigma\!\left(\bu\right)
    \]
    \State Sample Binomial count increment independently for each pixel \(i\):
    \[
    \widehat{y}_i
    \sim
    \operatorname{Binomial}\!\big(M_t,\rho_i\big)
    \]
    \State  \(\bm{x}_t \leftarrow \bx_{t+1}+\widehat{\bm{y}}\)
%    \State Set \(\ell\leftarrow \ell+1\)
    \State  $t \leftarrow t - 1$

\EndWhile
\State Return $\bx_1$
\end{algorithmic}
\end{algorithm}

%% file: cases.tex
\section{Case Study: Edges}
\label{sec:edges}
Edges are primitive structures in natural images and provide a controlled setting for studying how geometric information becomes determined as the photon budget increases. In this case study, we generate synthetic straight-edge images, train the proposed conditional SPAD generator, and evaluate whether generated high-count samples agree on task-relevant edge attributes: orientation and location.

\subsection{Experimental settings}

\paragraph{Synthetic edge dataset.}
We construct a synthetic dataset of two-tone straight-edge patterns on \(28\times28\) grayscale images. Each image is defined on normalized coordinates \((x,y)\in[-1,1]^2\). For each sample, we draw an edge orientation \(\theta\sim\mathcal{U}(0,\pi)\) and a point \(\mathbf{c}\sim\mathcal{U}([-1,1]^2)\). The edge normal is
\(\mathbf{n}=(-\sin\theta,\cos\theta)\), giving the separating line
\(\mathbf{n}^{\top}\mathbf{x}=b\), where \(b=\mathbf{n}^{\top}\mathbf{c}\).
Pixels on a randomly chosen side of the line are uniformly assigned to intensities of \(1.0\). The other side of the line are assigned to \(0.1\) for a realistic setting of getting photons from the dark side as well. The ground-truth attributes are the orientation \((\cos\theta,\sin\theta)\) and the line bias \(b\).

\paragraph{SPAD simulation.}
As described in Section~\ref{sec:overview-one-to-many}, each pixel count is sampled from the Binomial SPAD model in Eq.~\eqref{eq:spad_pixel_measurement}, with photon-detection probability \(\rho_i\) defined in Eq.~\eqref{eq:rho-def}. We set the light level to \(\tau=0.01\) and use \(N_1=10{,}000\) frames as the high-count target, to ensure few measurements don't contain too much information about the scene. During training, conditioning measurements are sampled with \(N_k\in[0,500]\). The anchor counts we use are 
\[
\mathcal A = \{10000, 500, 20\}.
\]
%before the final target \(N_1\). 
% At test time, we evaluate input measurements with
% \[
% N_k \in \{0,1,3,5,10,20,50,100,250\}.
% \]

\subsection{Results}
\paragraph{Qualitative generation.}
For each frame count \(N_k\), we fix one conditioning measurement \(\bx_k\) and generate 200 samples from the conditional generator. Figure~\ref{fig:edge_gen_sample_grid} shows the conditioning input, representative generated samples, and the sample mean and standard deviation maps.

When \(N_k=0\), the model receives no scene information and behaves as an unconditional generator, producing edges with different orientations and locations. As \(N_k\) increases, the generated samples become increasingly consistent. In the shown examples, even a few frames strongly constrain the edge geometry. The standard deviation maps further show that uncertainty becomes concentrated near the edge boundary, indicating that the remaining ambiguity is mainly in precise edge location rather than global structure.

\begin{figure}[t]
    \centering
    \includegraphics[width=1\columnwidth]{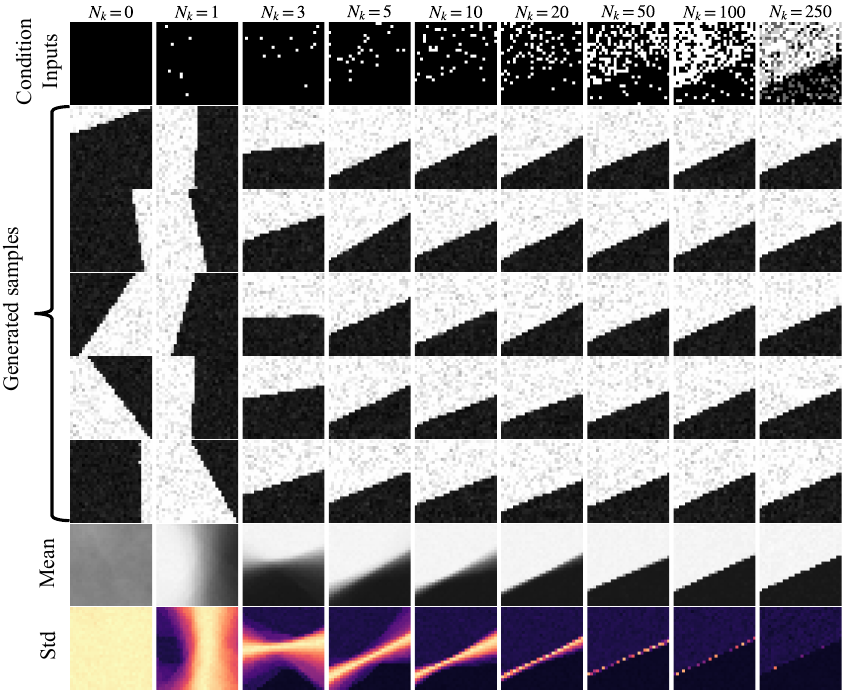}
    \caption{Conditional generation results for synthetic edge images. Each column fixes the same conditioning measurement \(\bx_k\) at a given frame count \(N_k\), followed by several high-count samples \(\bx_1\) generated conditioned on \(\bx_k\), their sample mean, and their sample normalized standard deviation.}
    \label{fig:edge_gen_sample_grid}
\end{figure}

\paragraph{Downstream task: edge prediction.}
To quantify task-level determinism, we train a CNN predictor on high-count target SPAD images \(\bx_1\) with \(N_1=10{,}000\) frames. The predictor estimates the edge orientation and line bias. We then generate a test set of 1,000 random edge images. For each image and each \(N_k\), we sample one conditioning measurement \(\bx_k\), generate high-count samples, and evaluate the CNN predictor on the generated outputs.

Figure~\ref{fig:edge_regression_resutls_vs_Nk} reports the angle error and bias error as functions of \(N_k\). The red dotted line shows the predictor performance on clean high-count SPAD images and serves as the performance ceiling. When a metric approaches this ceiling, the generated samples are effectively deterministic for that downstream attribute.
The results show that different edge attributes become deterministic at different photon budgets. Bias estimation stabilizes early, around \(N_k=20\), while accurate orientation estimation require roughly \(N_k=100\) frames. 
% Thus, deterministic inference emerges at different rates across edge parameters: the line offset becomes stable before the orientation is fully resolved.

\paragraph{Results on real data.}
To validate the proposed method on real measurements, we printed a straight-edge target and captured it with a Pi Imaging \(512\times512\) SPAD camera. The captured SPAD measurements are downsampled to \(28\times28\) pixels to match the resolution used by the conditional generator.

Real SPAD sensors contain hot pixels, i.e., pixels with elevated dark-count rates that can produce detections even when no light is incident on the sensor. To suppress these artifacts, we first captured 100{,}000 dark frames with the sensor optically blocked and used them to estimate a bad-pixel mask. Pixels marked by this mask are excluded during the downsizing process.

Figure~\ref{fig:real_edge_results} shows the result on the real captured edge. The generated samples exhibit quality comparable to the simulated edge results in Figure~\ref{fig:edge_gen_sample_grid}, indicating that the proposed model can generalize to real SPAD measurements after simple hot-pixel masking and downsizing.

\begin{figure}[t]
    \centering
    \includegraphics[width=1\columnwidth]{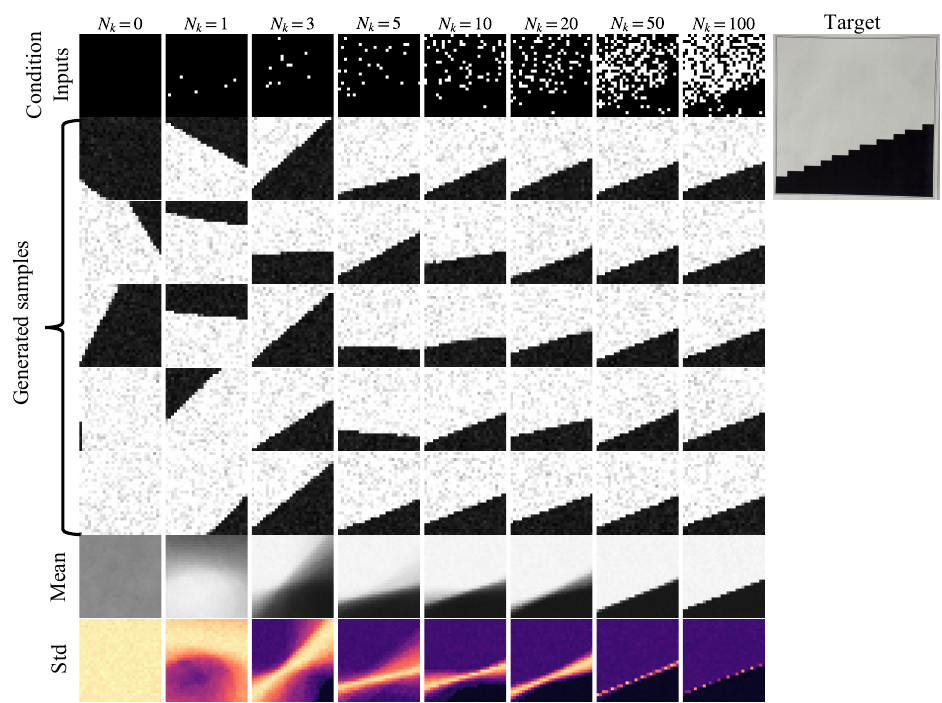}
    \caption{Conditional generation results on a real SPAD capture of a printed edge target. For each column, the figure shows the generated samples, their sample mean, and their sample standard deviation. At the rightmost corner, we show the photo of the printed target.}
    \label{fig:real_edge_results}
\end{figure}

\begin{figure}[t]
    \centering
    \includegraphics[width=1\columnwidth]{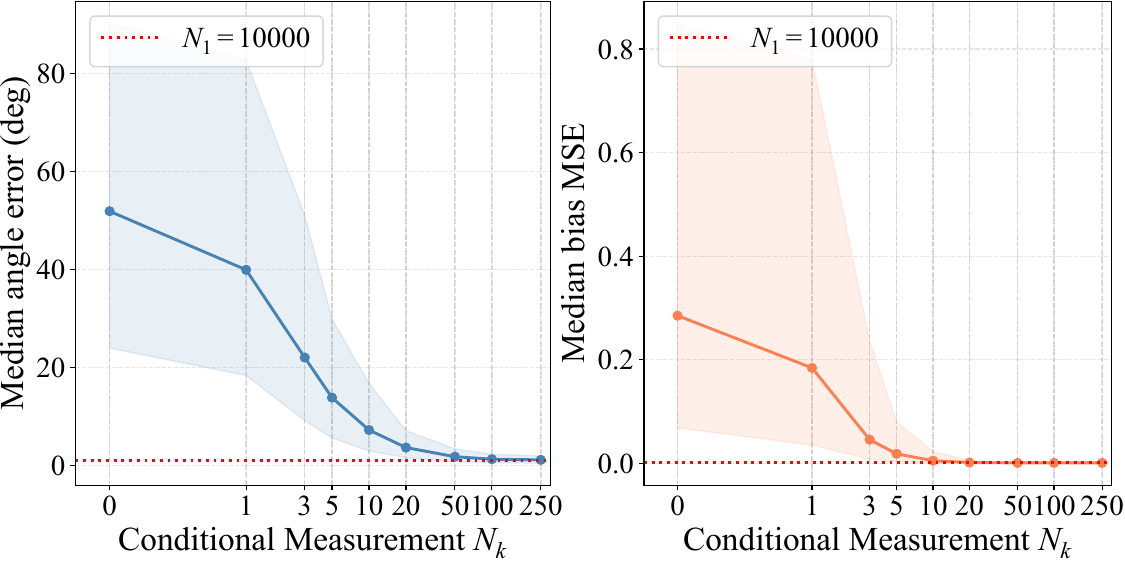}
    \caption{Downstream edge-prediction results for generated samples as a function of the number of conditioning SPAD frames \(N_k\). The red dotted line denotes performance on high-count target SPAD images with \(N_1=10{,}000\) frames.}
    \label{fig:edge_regression_resutls_vs_Nk}
\end{figure}

%===========================================
% \begin{figure*}[t]
%     \centering
%     \includegraphics[width=0.9\textwidth]{figs/MNIST_gen_sample_grid.pdf}
%     \caption{Conditional generation results for MNIST. For each fixed conditioning measurement \(\bx_k\), the figure shows generated high-count samples, their sample mean, and their sample standard deviation.}
%     \label{fig:mnist_gen_sample_grid}
% \end{figure*}

\section{Case Study: Optical Character Recognition}
\label{sec:mnist}
% Optical Character Recognition (OCR) has been a long-standing problem in computer vision. In this case study, we built a synthetic digit dataset based on the MNIST dataset~\cite{lecun2002gradient} and trained the proposed conditional SPAD generator to generate digits with given SPAD measurements.
% We then evaluate the classification accuracy of the generated samples by the number of binary SPAD measurements to identify the number of frames needed for the generated samples to become deterministic for the downstream task.

We next study optical character recognition (OCR), where the downstream task depends primarily on recovering digit identity rather than exact pixel values. Using MNIST~\cite{lecun2002gradient}, we construct a synthetic low-light digit dataset, train the proposed conditional SPAD generator, and measure how classification accuracy changes with the number of conditioning SPAD frames. This experiment lets us identify the photon budget at which digit identity becomes stable under conditional generation.

\subsection{Experimental settings}
\paragraph{Synthetic digit dataset.}
We construct the digit dataset by inverting MNIST images so that dark digits appear on a bright background, mimicking printed digits on paper. The images are then rescaled so that the maximum intensity is \(1.0\) and the minimum intensity is \(0.05\). This nonzero floor allows the dark digit regions to still produce photon detections, making the simulation more realistic than using completely black strokes. All images have resolution \(28\times28\) pixels.

\paragraph{SPAD simulation.}
We simulate SPAD measurements using the Binomial model in Eq.~\ref{eq:spad_pixel_measurement}. The light level is set to \(\tau=0.01\), and the high-count target is \(N_1=10{,}000\) frames. During training, conditioning measurements are sampled with \(N_k\in[0,500]\). We use the intermediate anchor counts \(
\mathcal A = \{N_1, 500, 100, 20, 10\}.
\)

\subsection{Results}
\paragraph{Qualitative generation.}
Figure~\ref{fig:mnist_gen_sample_grid} shows conditional generations from fixed SPAD measurements. For each conditioning input \(\bx_k\), we draw 200 samples and visualize representative outputs together with their sample mean and standard deviation.

At very small \(N_k\), the same conditioning measurement can produce samples from different digit classes, reflecting strong uncertainty in digit identity. As \(N_k\) increases, the samples concentrate on the target digit class. In the examples shown, most samples already agree with the target identity by \(N_k=20\). The standard deviation maps show that residual variation is concentrated around digit strokes and boundaries, suggesting that class-level ambiguity resolves before fine stroke location and shape become fully fixed.

\begin{figure}[t]
    \centering
    \includegraphics[width=1\columnwidth]{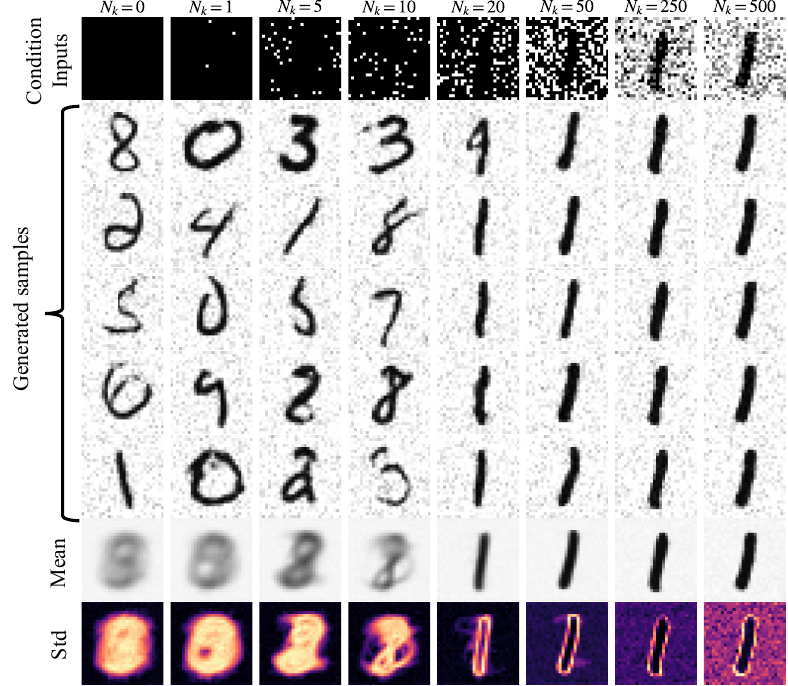}
    \caption{Conditional generation results for MNIST. For each fixed conditioning measurement \(\bx_k\), the figure shows generated high-count samples, their sample mean, and their sample standard deviation.}
    \label{fig:mnist_gen_sample_grid}
\end{figure}

\paragraph{Downstream task: digit classification.}
We quantify OCR-level determinism using a CNN classifier trained on high-count target SPAD images \(\bx_1\) with \(N_1=10{,}000\) frames. We form a test set of 1,000 digit images, uniformly distributed across the ten digit classes. For each image and each \(N_k\), we sample one conditioning measurement \(\bx_k\), generate high-count samples, and classify the generated outputs.

Figure~\ref{fig:avg_MNIST_classification_accuracy_vs_Nk} reports the average classification accuracy as a function of \(N_k\). Accuracy increases steadily with the conditioning frame count and exceeds \(99\%\) by \(N_k=250\).
% Figure~\ref{fig:cm_cls_gen_mnist} shows the corresponding confusion matrices. As early as \(N_k=10\), the diagonal entries become visibly dominant, indicating that the generated samples increasingly preserve the correct digit identity.

\paragraph{Results on real data.}
We also evaluate the MNIST model on real SPAD measurements. We printed an \(18\times18\) grid of digit images and captured it with the SPAD camera. As in the edge experiment, hot pixels are identified from a dark calibration capture and excluded during preprocessing.

To align the printed grid with the model input, we manually clicked the four corners of the digit grid in a high-count reference image and estimated a homography. Using this homography, each captured SPAD count image is rectified onto a \(504\times504\) grid, corresponding to \(18\times18\) digit cells with \(28\times28\) pixels per cell. The rectified image is then partitioned into \(28\times28\) digit patches, which are used as conditioning inputs to the conditional generator.

Figure~\ref{fig:ocr_real_results} shows restoration results for different numbers of SPAD measurements. As the number of measurements increases, more generated digits agree with the printed target grid, demonstrating the same stochastic-to-deterministic transition observed in simulation.

\begin{figure}[t]
    \centering
    \includegraphics[width=0.9\columnwidth]{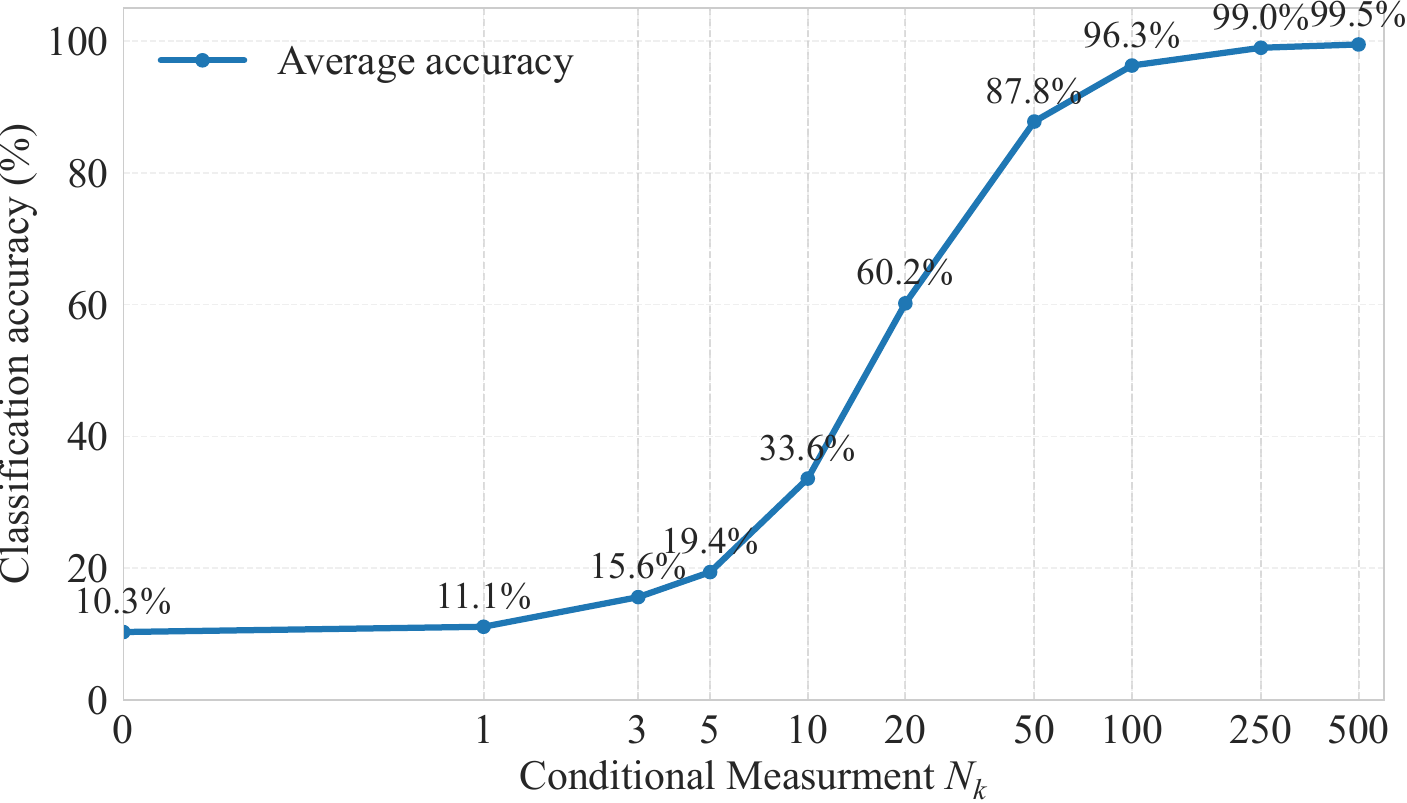}
    \caption{Classification accuracy of generated MNIST samples as a function of the number of conditioning SPAD frames \(N_k\).}
    \label{fig:avg_MNIST_classification_accuracy_vs_Nk}
\end{figure}

% \begin{figure}[t]
%     \centering
%     \includegraphics[width=1\columnwidth]{figs/cm_cls_gen_mnist.pdf}
%     \caption{Confusion matrices for generated MNIST samples across different conditioning frame counts \(N_k\).}
%     \label{fig:cm_cls_gen_mnist}
% \end{figure}

\begin{figure*}[t]
    \centering
    \includegraphics[width=1\textwidth]{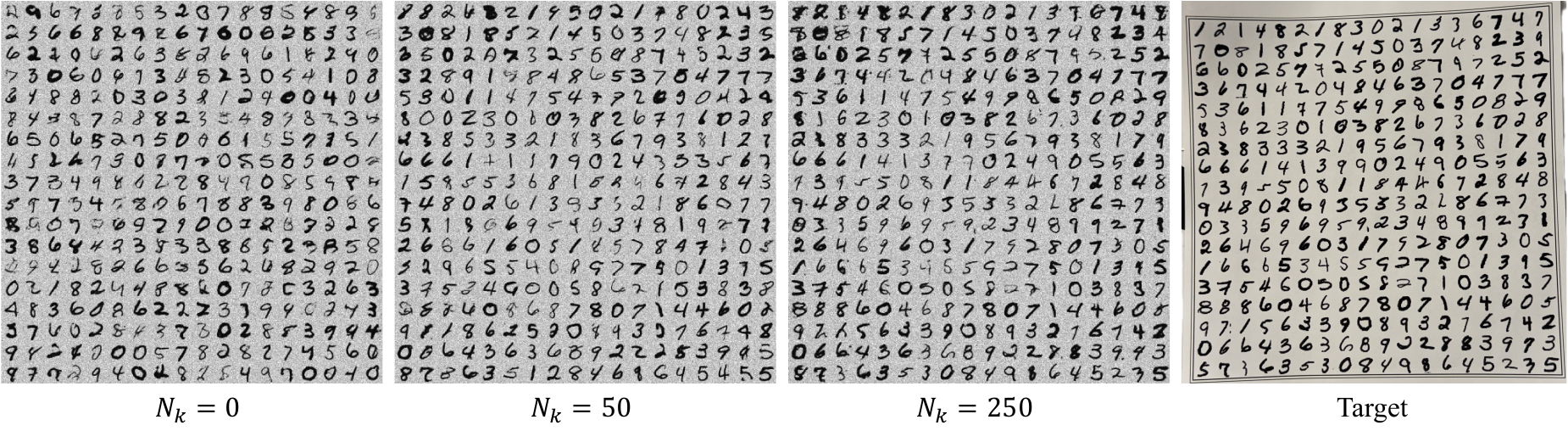}
    \caption{Conditional generation results with real captures in OCR task. We exhibit the restored grid of digits with selected number of measurements.}
    \label{fig:ocr_real_results}
\end{figure*}

%===========================================
\section{Case Study: QR Code Readers}
\label{sec:QR}
QR codes are widely used to encode information in a compact and machine-readable form. 
Decoding QR codes under extreme low-light conditions is therefore important for applications such as real-time warehouse sorting and traceability~\cite{chen2019adaptive}, high-throughput conveyor-belt product tracking~\cite{qian2017optimization} where short exposure times are required to avoid motion blur.
In this case study, we use a public QR-code dataset, simulate SPAD measurements, train the proposed conditional SPAD restoration model, and evaluate decoding accuracy as a function of the number of binary SPAD frames. We also compare single-step and multistep reverse generation to evaluate the benefit of intermediate anchor frames.

\subsection{Experimental settings}
\paragraph{Figshare QR code dataset.}
We use a subset of the Figshare QR Code Dataset V2~\cite{sancar2025}, which contains computer-generated QR codes with distortions designed to mimic scanner-induced effects. Each image is a single-channel grayscale QR code of size \(224\times224\). We use 12,799 images for training and 3,200 images for validation/testing. The encoded payload lengths are evenly distributed across 12, 24, 36, and 48 characters. The payload strings are alphanumeric strings resembling hexadecimal sequences, where each character corresponds to four bits if interpreted as a hexadecimal symbol.

\paragraph{SPAD simulation.}
As in the previous case studies, each pixel count is sampled from the Binomial SPAD model. We set the light level to \(\tau=0.01\) and use \(N_1=10{,}000\) frames as the high-count target. During training, conditioning measurements are sampled with \(N_k\in[0,500]\). The multistep model uses intermediate anchor counts \(\mathcal A = \{N_1, 100, 20, 10 \}\).
% We evaluate the generated samples by the number of binary SPAD measurements with \(N_k \in \{0, 1, 3, 5, 7, 10, 15, 20, 35, 50, 100, 250, 500\}\).

\subsection{Results}

\paragraph{Qualitative stochasticity of multistep generation.}
We next visualize the conditional samples produced by the multistep model. Figure~\ref{fig:qr_gen_sample_grid} shows repeated generations from a fixed conditioning measurement \(\bx_k\), together with the sample mean and standard deviation maps. At small \(N_k\), the generated samples vary substantially, and the standard deviation is concentrated around QR-code structures. As \(N_k\) increases, the samples become more consistent, reflecting the transition from a stochastic to a deterministic regime for QR reconstruction.

\begin{figure*}[t]
    \centering
    \includegraphics[width=\textwidth]{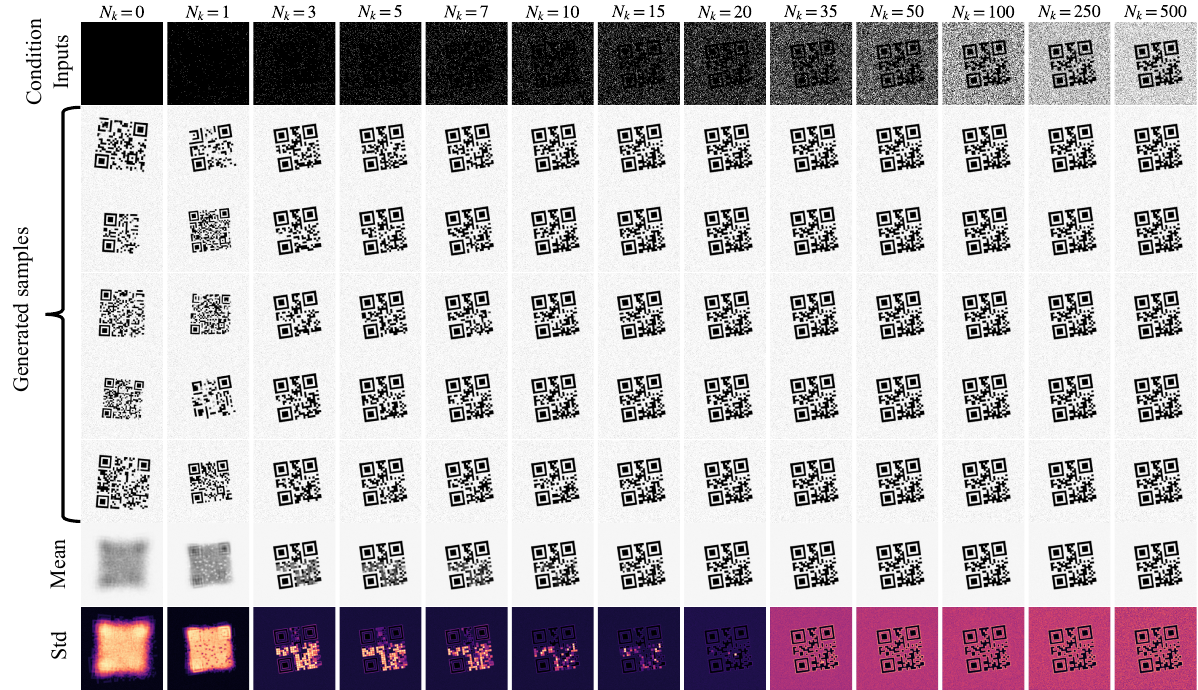}
    \caption{Conditional QR-code generation results from the multistep model. Each column fixes the conditioning measurement \(\bx_k\) at a given frame count \(N_k\), followed by generated samples, their sample mean, and their sample standard deviation. Each standard deviation map is normalized to have maximum value of 1.}
    \label{fig:qr_gen_sample_grid}
\end{figure*}

\paragraph{Comparison with raw SPAD measurements.}
% Since we exploited the off-the-shelf QR code decoder purposed for any generic QR code images, we perform a comparison of decoding accuracy of the generated samples with the raw SPAD measurements without any restoration. 
Finally, we compare decoding accuracy between restored QR-code samples and raw SPAD measurements without restoration. This comparison uses the same off-the-shelf QR-code decoder for both inputs. Figure~\ref{fig:qr_decode_accuracy_pairwise_vs_spad_broken_axis} shows that raw SPAD measurements require approximately \(N_k=5000\) frames to reach stable decoding accuracy, whereas generated samples from the multistep model reach stable decoding with roughly \(N_k=100\) frames. This result shows that the proposed generative restoration model enables task-level determinism at a much lower photon budget than raw SPAD measurements alone.

\begin{figure}[t]
    \centering
    \includegraphics[width=0.9\columnwidth]{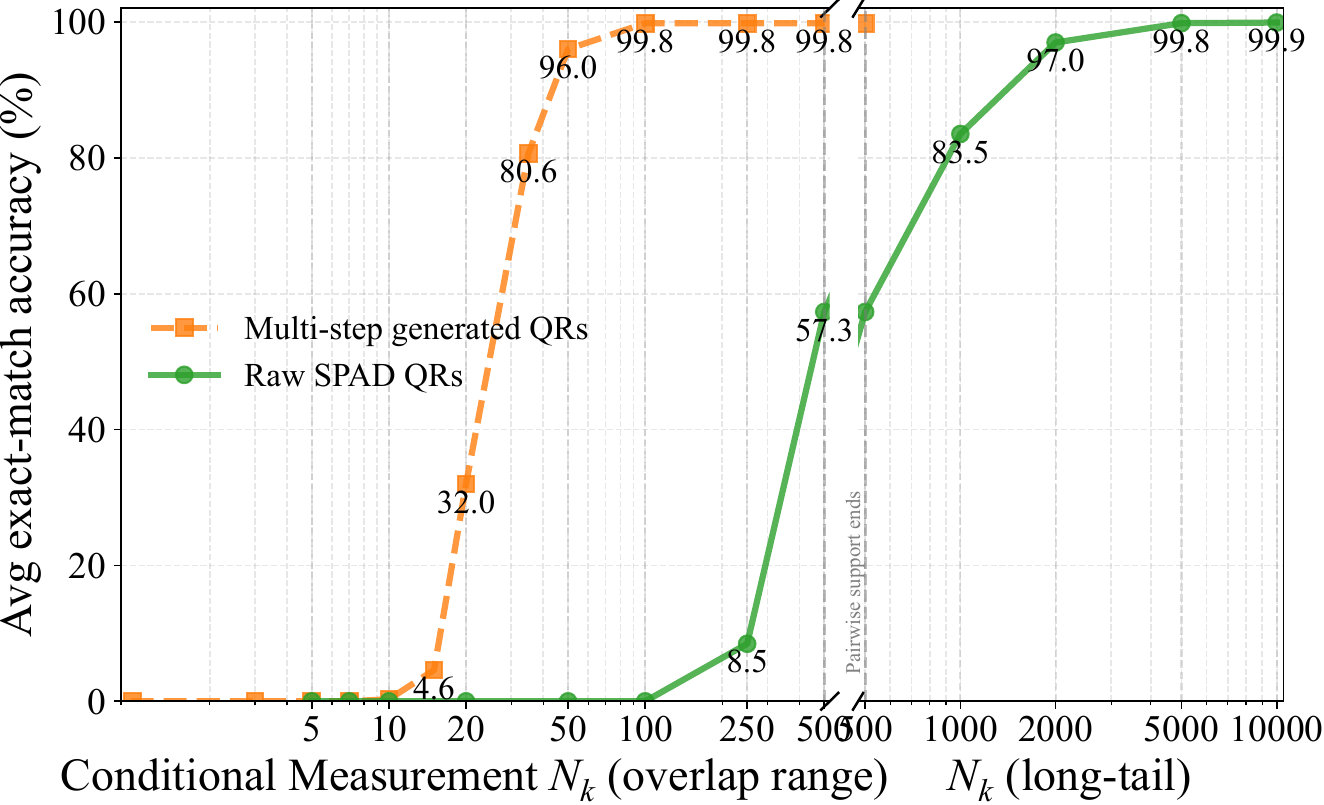}
    \caption{QR-code decoding accuracy of multistep generated samples compared with raw SPAD measurements. The generative model reaches stable decoding accuracy with far fewer conditioning frames.}
    \label{fig:qr_decode_accuracy_pairwise_vs_spad_broken_axis}
\end{figure}

%===============================================
\section{Case Study: Faces}
\label{sec:faces}
% Performing the last case study on natural image generation, we chose face images from FFHQ dataset~\cite{karras2019style}. We resized the images to \(64\times64\) pixels for the generation task. We then trained the proposed conditional SPAD generator, and evaluated the generation quality by the landmark detection and pose estimation performance.

As a final case study, we analyze the results of the proposed method on natural face images. We use grayscale \(64\times64\) images derived from the FFHQ dataset~\cite{karras2019style}, train the proposed conditional SPAD generator, and evaluate whether generated samples preserve facial geometry through landmark detection and pose estimation.

\subsection{Experimental settings}
\paragraph{FFHQ-64 grayscale dataset.}
We use a  pre-processed grayscale version of FFHQ derived from the raw PNG images, which contains 70,000 face images.
We follow the train/validation split recommended by the official FFHQ release:
images with index <60000 are used for training and images with index >=60000
are used for validation, yielding 60,000 training images and 10,000 validation
images.
Each raw face image is converted to grayscale and resized to \(64\times64\) pixels.

\paragraph{SPAD simulation.}
Compared with the previous case studies, face images contain richer structure across multiple scales, from global pose and face shape to fine details such as eyes, mouth, and hair. We therefore use a higher light level, \(\tau=0.1\), and a larger high-count target of \(N_1=100{,}000\) frames. During training, conditioning measurements are sampled with \(N_k\in[0,10{,}000]\). We use intermediate anchor counts
\[
\mathcal A = \{N_1, 10{,}000, 1{,}000, 500, 100, 20, 10, 5 \}.
\]

\subsection{Results}
% \paragraph{Qualitative generation.}
% Figure~\ref{fig:ffhq_gen_sample_grid} shows conditional generation results for face images. As expected, the generated samples are highly stochastic at small \(N_k\), but become increasingly consistent as the number of conditioning frames grows. Noticeably, we can observe that different facial features become stable at different photon budgets. In the shown examples, global structure such as face shape and pose becomes consistent with only a few frames, while finer details such as eyes, mouth, and hair require larger \(N_k\). In the second example, the presence of glasses is already consistent from a single binary SPAD measurement, and the sunglasses' appearance becomes stable by \(N_k=3\).

\paragraph{Downstream task: facial geometry prediction.}
To evaluate task-level determinism for faces, we measure whether generated samples preserve facial geometry. We apply the off-the-shelf MediaPipe Face Landmarker~\cite{lugaresi2019mediapipe,google_mediapipe_facelandmarker} to both generated samples and high-count target images with \(N_1=100{,}000\) frames. We then compare detected facial landmarks and estimated head pose.

Figure~\ref{fig:ffhq64_landmarks_metrics} reports the median landmark and pose errors, with the 25th--75th percentile range. As \(N_k\) increases, both errors decrease and their distributions become tighter, indicating that the generated samples become more deterministic in facial geometry.

Figure~\ref{fig:ffhq64_mediapipe_cond_generated_landmarks} visualizes the detected landmarks on generated samples and raw SPAD measurements. MediaPipe detects landmarks on all generated samples, but fails on raw SPAD measurements until \(N_k=10{,}000\) frames. In contrast, landmarks and pose on the generated samples are already mostly stable around \(N_k=20\). This shows that the generative model produces task-usable facial structure at much lower photon budgets than raw SPAD measurements.

\begin{figure}[t]
    \centering
    \includegraphics[width=1\columnwidth]{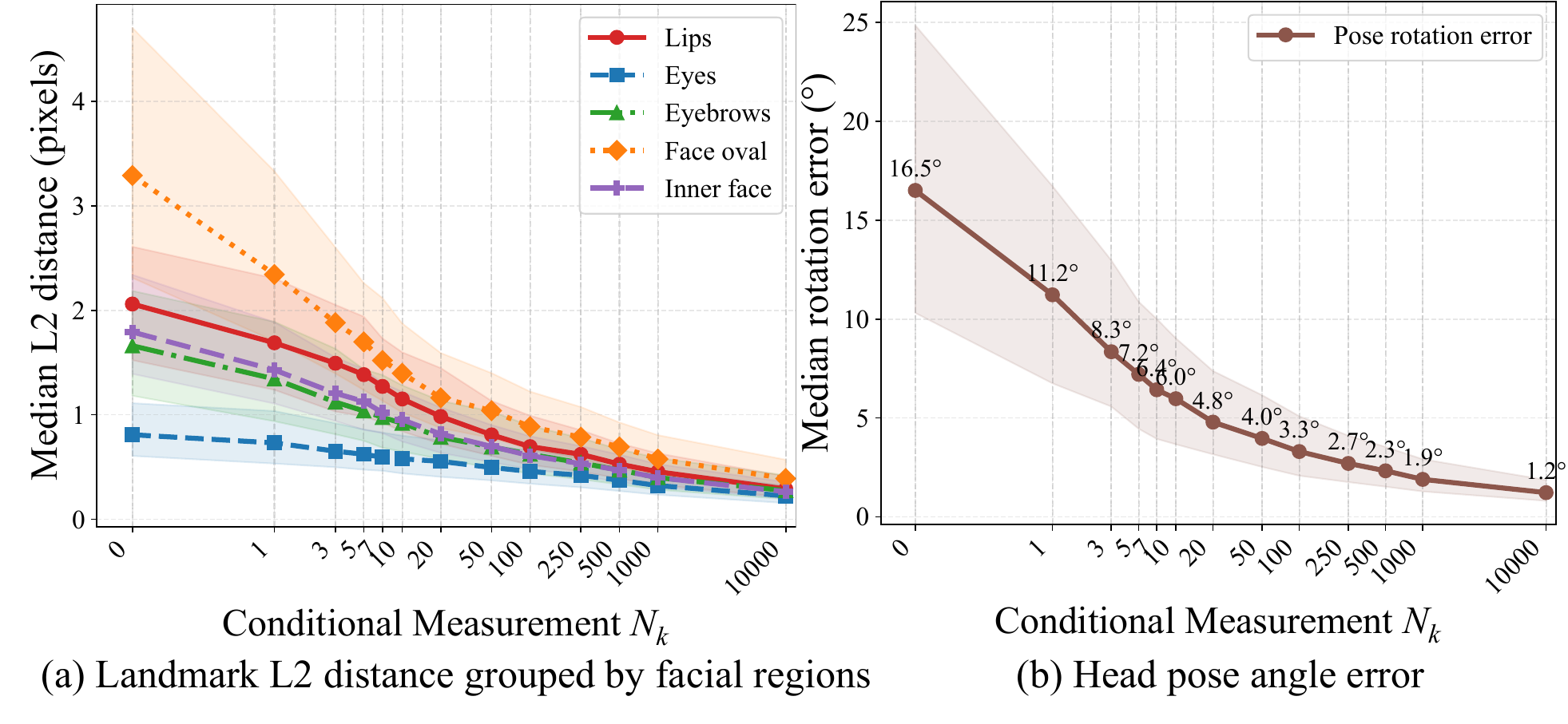}
    \caption{Facial-geometry errors for generated FFHQ samples. We compare MediaPipe landmarks and head pose estimated from generated samples against those estimated from high-count target images with \(N_1=100{,}000\) frames.}
    \label{fig:ffhq64_landmarks_metrics}
\end{figure}

\begin{figure}[t]
    \centering
    \includegraphics[width=1\columnwidth]{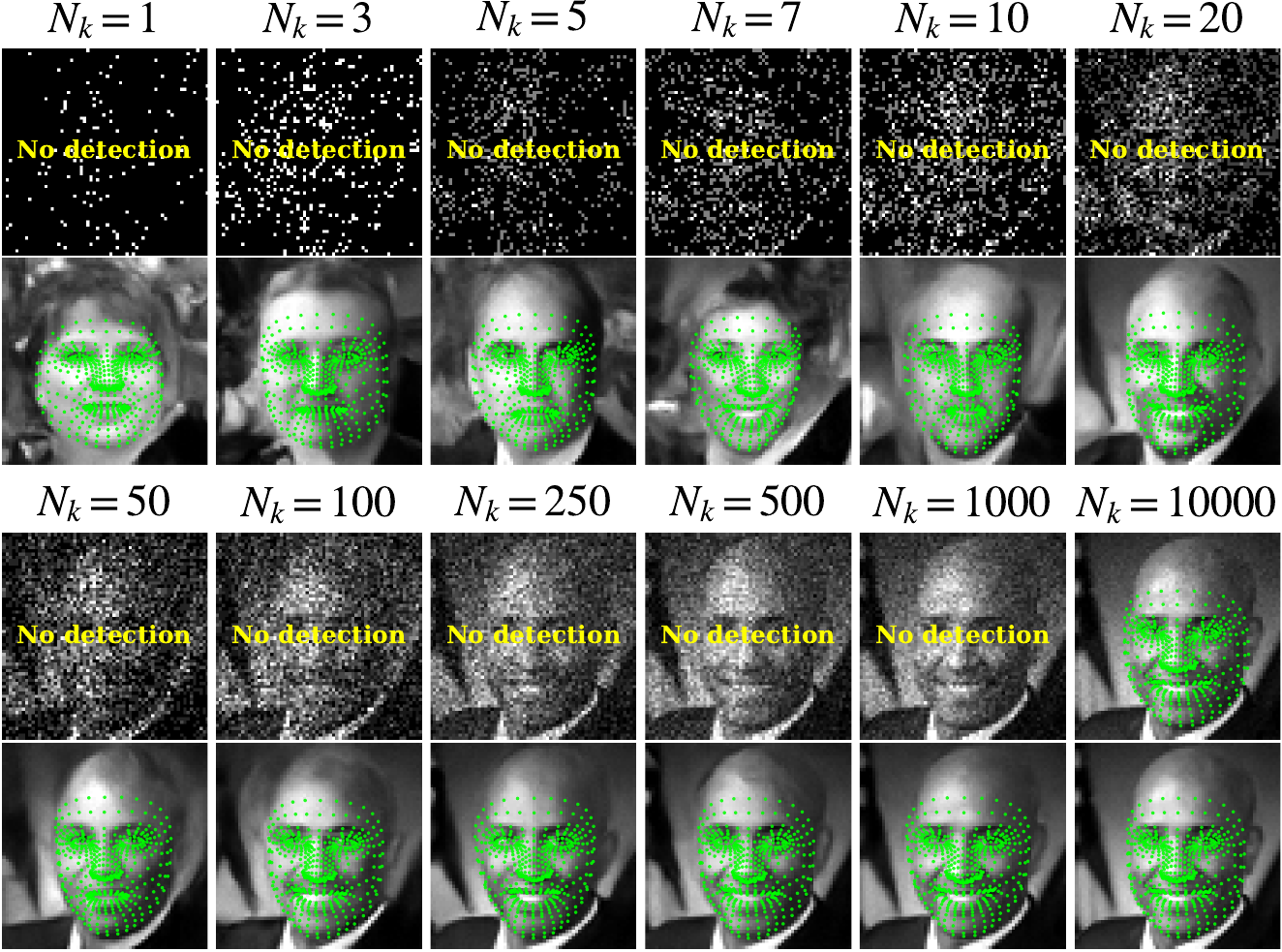}
    \caption{MediaPipe landmark visualizations on generated FFHQ samples and raw SPAD measurements. The generated samples support landmark detection at much lower frame counts than the raw SPAD inputs.}
    \label{fig:ffhq64_mediapipe_cond_generated_landmarks}
\end{figure}

\paragraph{Results on real data.}
Finally, we evaluate the FFHQ model on real SPAD measurements. We captured face images with the SPAD camera and performed the same preprocessing as the synthetic data: suppressing hot pixels and downsizing to \(64\times64\) pixels.

Figure~\ref{fig:ffhq_gen_sample_grid} shows conditional generation results for the real captured face image. As expected, the generated samples are highly stochastic at small \(N_k\), but become increasingly consistent as the number of conditioning frames grows. Noticeably, we can observe that different facial features become stable at different photon budgets. In the shown examples, global structure such as face shape and pose becomes consistent with only a few frames, while finer details such as eyes, mouth, and hair require larger \(N_k\). In the second example, the presence of glasses is already consistent from a single binary SPAD measurement, and the sunglasses' appearance becomes stable by \(N_k=3\).

\begin{figure*}[t]
    \centering
    \includegraphics[width=1\textwidth]{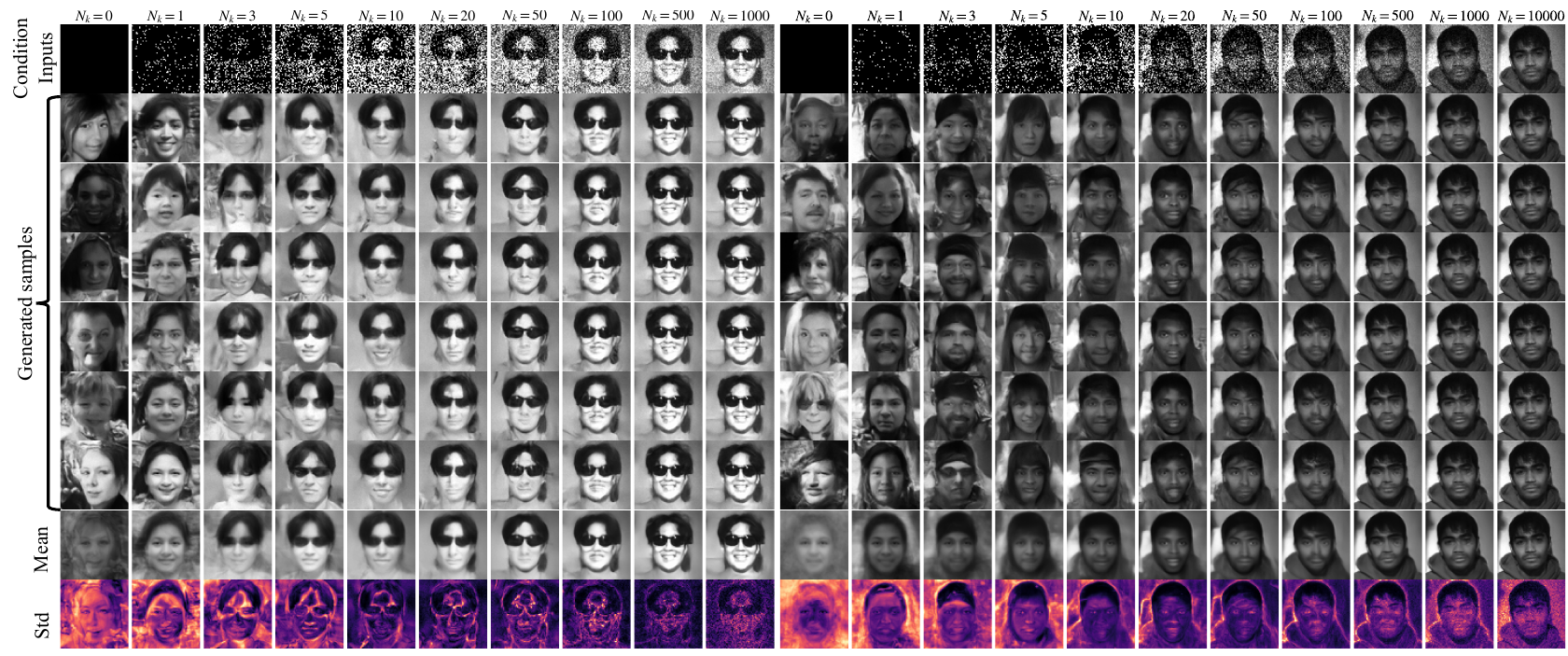}
    \caption{Conditional generation results for real captured face images. For each fixed conditioning measurement \(\bx_k\), the figure shows generated high-count samples, their sample mean, and their sample standard deviation.}
    \label{fig:ffhq_gen_sample_grid}
\end{figure*}

% \section{Case Study: Natural Images}
% \label{sec:natural}

%% file: discuss.tex
\paragraph{A phase transition.} Across multiple tasks, we see that there is a phase transition, typically between 20 and 50 frames, where the performance of classifiers or regressors is very close to its saturated value (i.e., the value at very large photon counts). 
That this phase transition occurs for such low photon counts is an important conclusion of this work.
This has two immediate consequences.
First, it suggests that operating SPAD sensors between four and six bits is a good operating point as far as inference performance is concerned; such higher bit depth readout provides significant reductions in the readout bandwidth.
Second, if we are interested in reliable downstream inference, it suggests that we are likely to incur a 20$\times$  to 50$\times$ penalty in the time resolution of the sensor; given that commercially available sensors today operate at rates as high as 100 kHz, our work suggests that viable inference is possible in 2-5 kHz range.

\paragraph{Limitations.} A key limitation of this work is that its conclusions are all empirical, and hence, subject to the specifics of the training dataset and vagaries of the protocols adopted.
A theoretical counterpart to the phase transition results would help shed deeper insights into the nature of the conditional posterior.
However, such theory has remained elusive in other instances.

Another limitation is the intensive compute needed for successful training of the proposed conditional generative model.
We reduced the resolution of images, especially for the faces dataset, to keep training times tractable/
For harder problems, such as natural images, we would also need to use more anchor frames, which would further acerbate the compute concerns.

\paragraph{The high flux regime.} In high flux regimes, the probability of multiple photons arriving within each exposure becomes non-negligible. 
Surprisingly, this regime also results in a one-to-many problem since doubling or tripling the photon arrival rate doesn't results in a scaling of the SPAD measurements.
The techniques developed in this paper are, in principle, capable of handling this scenario since the Binomial marginals applies just as well to the high flux case; a cavaet here is that we would need to train diffusion models capable of generating high dynamic range images.

%\paragraph{Is this analysis unique to SPAD sensors?}
\paragraph{Conclusion.} This paper provides tools to model image synthesis over a photon budget landscape. 
We argue that in photon starved regimes the reconstruction problem is a one-to-many problem, a feature that has not been explored in prior work.
To realize this, we develop a consistent theory of diffusion models with pixelwise Binomial distribution; this provides a toolbox for studying the one-to-many nature of the inverse problem, and how reconstruction and inference performance changes as the inverse problem becomes solvable.
We believe this provide invaluable insights in the nature of SPAD sensors, in terms of practical operating points for them.

%% file: appendix.tex
\appendix

\section{Frame-Thinning Markov Chain}
\label{app:frame-thinning-proof}

We prove Theorem~\ref{thm:frame-thinning-markov}.  Let
\[
K_t(\bx_{t+1}\mid \bx_t)
:=
\prod_{i=1}^D
\operatorname{Hypergeometric}(x_{t+1,i};N_t,x_{t,i},N_{t+1}),
\]
where the first argument is the kept count, the population size is $N_t$, the number of successes is $x_{t,i}$, and the number of draws is $N_{t+1}$.

\begin{proof}
Define the forward path distribution by
\begin{equation}
q(\bx_{1:L}\mid \brho)
=
q(\bx_1\mid \brho)
\prod_{t=1}^{L-1}K_t(\bx_{t+1}\mid \bx_t).
\label{eq:app-forward-path}
\end{equation}
Then
\[
q(\bx_{t+1}\mid \bx_{1:t},\brho)=K_t(\bx_{t+1}\mid \bx_t),
\]
so the process is Markov by construction.

It remains to show that the Hypergeometric kernel preserves the Binomial marginal. It suffices to prove the scalar case. Suppose
\[
X_t\mid \rho \sim \operatorname{Binomial}(N_t,\rho),
\]
and
\[
X_{t+1}\mid X_t=k
\sim
\operatorname{Hypergeometric}(N_t,k,N_{t+1}).
\]
For $m\in\{0,\ldots,N_{t+1}\}$,
\begin{align}
q(X_{t+1}=m\mid \rho)
={}&
\sum_k
\binom{N_t}{k}\rho^k(1-\rho)^{N_t-k}
\notag \\
&\times
\frac{\binom{k}{m}\binom{N_t-k}{N_{t+1}-m}}{\binom{N_t}{N_{t+1}}}.
\end{align}
Using
\[
\binom{N_t}{k}\binom{k}{m}
=
\binom{N_t}{m}\binom{N_t-m}{k-m},
\]
and setting $y=k-m$, we get
\begin{align}
q(X_{t+1}=m\mid \rho)
={}&
\frac{\binom{N_t}{m}}{\binom{N_t}{N_{t+1}}}
\rho^m(1-\rho)^{N_{t+1}-m}
\notag \\
&\times
\sum_{y=0}^{N_t-N_{t+1}}
\binom{N_t-m}{y}
\binom{N_t-m-y}{N_{t+1}-m}
\notag \\
&\quad\times
\rho^y(1-\rho)^{N_t-N_{t+1}-y}.
\end{align}
The identity
\[
\binom{A}{y}\binom{A-y}{B}
=
\binom{A}{B}\binom{A-B}{y}
\]
with $A=N_t-m$ and $B=N_{t+1}-m$ gives
\begin{align}
q(X_{t+1}=m\mid \rho)
={}&
\frac{\binom{N_t}{m}\binom{N_t-m}{N_{t+1}-m}}{\binom{N_t}{N_{t+1}}}
\rho^m(1-\rho)^{N_{t+1}-m}
\notag \\
&\times
\sum_{y=0}^{N_t-N_{t+1}}
\binom{N_t-N_{t+1}}{y}
\rho^y(1-\rho)^{N_t-N_{t+1}-y}.
\end{align}
The sum equals $1$, and
\[
\frac{\binom{N_t}{m}\binom{N_t-m}{N_{t+1}-m}}{\binom{N_t}{N_{t+1}}}
=
\binom{N_{t+1}}{m}.
\]
Therefore
\[
X_{t+1}\mid \rho
\sim
\operatorname{Binomial}(N_{t+1},\rho).
\]
Applying this independently over pixels proves
\[
\bX_{t+1}\mid \brho
\sim
\prod_i\operatorname{Binomial}(N_{t+1},\rho_i).
\]
\end{proof}

\section{Reverse Frame-Addition Kernel}
\label{app:true-reverse-proof}

Let $M_t=N_t-N_{t+1}$ and $\Delta\bx_t=\bx_t-\bx_{t+1}$. Conditioned on $\brho$, the missing $M_t$ binary frames are independent of the already retained $N_{t+1}$ frames. Hence
\begin{equation}
\bX_t=\bX_{t+1}+\bY_t,
\qquad
Y_{t,i}\mid \brho\sim \operatorname{Binomial}(M_t,\rho_i)
\label{eq:app-oracle-frame-addition}
\end{equation}
independently over $i$. Thus the oracle reverse kernel is
\begin{equation}
q(\bx_t\mid \bx_{t+1},\brho,N_t,N_{t+1})
=
\prod_{i=1}^D
\operatorname{Binomial}(x_{t,i}-x_{t+1,i};M_t,\rho_i),
\label{eq:app-oracle-reverse-kernel}
\end{equation}
with support $x_{t,i}\ge x_{t+1,i}$ and $x_{t,i}-x_{t+1,i}\le M_t$.

If $\nu_{\mathrm{true}}(\brho\mid \bx_{t+1},N_{t+1})$ denotes the posterior distribution of $\brho$ after observing $\bX_{t+1}=\bx_{t+1}$, then marginalizing $\brho$ gives
\begin{equation}
\begin{aligned}
&q(\bx_t\mid \bx_{t+1},N_t,N_{t+1}) \\
&\quad=
\int
\left[
\prod_{i=1}^D
\operatorname{Binomial}(x_{t,i}-x_{t+1,i};M_t,\rho_i)
\right] \\
&\qquad\times
\nu_{\mathrm{true}}(d\brho\mid \bx_{t+1},N_{t+1}).
\end{aligned}
\label{eq:app-true-reverse-mixture}
\end{equation}

\section{ELBO Derivation}
\label{app:elbo-proof}

Let
\[
c_t=(\bx_{t+1},N_t,N_{t+1})
\]
denote the condition. The latent marginal likelihood is
\[
p_{\theta,\psi}(\bx_t\mid c_t)
=
\int p_\theta(\bx_t\mid c_t,\bz)p_\psi(\bz\mid c_t)d\bz.
\]
For any variational posterior $q_\phi(\bz\mid \bx_t,c_t)$,
\begin{align}
&\log p_{\theta,\psi}(\bx_t\mid c_t)
\notag \\
&=
\log \int
q_\phi(\bz\mid \bx_t,c_t)
\frac{p_\theta(\bx_t\mid c_t,\bz)p_\psi(\bz\mid c_t)}{q_\phi(\bz\mid \bx_t,c_t)}d\bz
\notag \\
&\ge
\mathbb E_{q_\phi}
\big[
\log p_\theta(\bx_t\mid c_t,\bz)
+\log p_\psi(\bz\mid c_t)
\notag \\
&\qquad\quad
-\log q_\phi(\bz\mid \bx_t,c_t)
\big]
\notag \\
&=
\mathbb E_{q_\phi}
\!\left[
\log p_\theta(\bx_t\mid c_t,\bz)
\right]
\notag \\
&\quad
-
\operatorname{KL}
\!\left(q_\phi(\bz\mid \bx_t,c_t)\,\|\,p_\psi(\bz\mid c_t)\right).
\end{align}
The inequality is Jensen's inequality.

\section{Loss Terms}
\label{app:loss-derivations}

\subsection{Binomial logit loss}

Given $\bz$, let $\brho_\theta=\sigma(\bu)$ and $M_t=N_t-N_{t+1}$. Then
\begin{align}
&p_\theta(\bx_t\mid \bx_{t+1},N_t,N_{t+1},\bz)
\notag \\
&\quad=
\prod_{i=1}^D
\binom{M_t}{\Delta x_{t,i}}
\rho_{\theta,i}^{\Delta x_{t,i}}
(1-\rho_{\theta,i})^{M_t-\Delta x_{t,i}}.
\end{align}
Using
\[
\log\sigma(u)=u-\operatorname{softplus}(u),
\qquad
\log(1-\sigma(u))=-\operatorname{softplus}(u),
\]
we get
\begin{align}
&-\log p_\theta(\bx_t\mid \bx_{t+1},N_t,N_{t+1},\bz)
\notag \\
&\quad=
-\sum_i\log\binom{M_t}{\Delta x_{t,i}}
\notag \\
&\qquad+
\sum_i
\left(M_t\operatorname{softplus}(u_i)-\Delta x_{t,i}u_i\right).
\end{align}
The first term is independent of the model parameters and is dropped in the practical loss.

\subsection{Reparameterization}

For a diagonal Gaussian posterior
\[
q_\phi(\bz\mid \bx_t,\bx_{t+1},N_t,N_{t+1})
=
\mathcal N(\bm{z}; \bmu^q,\operatorname{diag}(e^{\bell^q})),
\]
a sample is written as
\[
\bz=\bmu^q+\exp\left(\tfrac12\bell^q\right)\odot\beps,
\qquad
\beps\sim\mathcal N(\bm{0},\bm{I}).
\]
The factor $1/2$ converts log-variance to standard deviation.

\subsection{Gaussian KL}

Let
\[
q=\mathcal N(\bmu^q,\operatorname{diag}(e^{\bell^q})),
\qquad
p=\mathcal N(\bmu^p,\operatorname{diag}(e^{\bell^p})).
\]
Then
\begin{equation}
\operatorname{KL}(q\|p)
=
\frac12\sum_r
\left[
\ell_r^p-\ell_r^q-1
+e^{\ell_r^q-\ell_r^p}
+(\mu_r^q-\mu_r^p)^2e^{-\ell_r^p}
\right].
\label{eq:app-gaussian-kl}
\end{equation}
To see this, start from
\[
\operatorname{KL}(q\|p)=\mathbb E_q[\log q(\bz)-\log p(\bz)].
\]
For coordinate $r$,
\[
\mathbb E_q[(Z_r-\mu_r^q)^2]=e^{\ell_r^q},
\qquad
\mathbb E_q[(Z_r-\mu_r^p)^2]=e^{\ell_r^q}+(\mu_r^q-\mu_r^p)^2.
\]
Substitution into the Gaussian log densities gives~\eqref{eq:app-gaussian-kl}.

\subsection{Hierarchical latent groups}
\label{app:hier-latents}

If the latent variable is grouped as
\[
\bz=(\bz_g,\bz_{s_1},\bz_{s_2},\ldots),
\]
the prior and posterior are parameterized by one diagonal Gaussian per group. The total KL used in training is
\begin{equation}
\mathcal J_{\mathrm{KL}}
=
\sum_{h\in\mathcal G}\mathcal J_{\mathrm{KL},h},
\label{eq:app-group-kl}
\end{equation}
where each groupwise term has the form~\eqref{eq:app-gaussian-kl}, summed over all channels and spatial positions of that group.

\subsection{Gap-normalized reconstruction loss}

For one pixel, let $\hat\rho_i=\sigma(u_i)$ and assume the oracle increment satisfies
\[
\Delta X_{t,i}\mid \rho_i\sim\operatorname{Binomial}(M_t,\rho_i).
\]
The expected raw logit loss is
\begin{align}
\mathbb E[M_t\operatorname{softplus}(u_i)-\Delta X_{t,i}u_i\mid \rho_i]
&=
M_t\left(\operatorname{softplus}(u_i)-\rho_i u_i\right).
\end{align}
The excess risk above the optimum $\hat\rho_i=\rho_i$ is
\begin{equation}
M_t\operatorname{KL}
\left(
\operatorname{Bernoulli}(\rho_i)
\| 
\operatorname{Bernoulli}(\hat\rho_i)
\right).
\label{eq:app-raw-excess-risk}
\end{equation}
Thus the raw reconstruction term scales linearly with the frame gap $M_t$. Dividing the reconstruction loss by $M_t$ removes this explicit scale factor. This is a training reweighting only; it does not change the generative kernel.

\subsection{Single-step vs. multistep generation.}
We compare single-step and multistep reverse generation using QR decoding accuracy. Because the proposed pairwise model is conditioned on the source and target frame counts, it can be trained either to jump directly from \(\bx_k\) to \(\bx_1\), or to reverse the SPAD ladder through intermediate anchor frames. The single-step model directly estimates
\[
p_{\theta,\psi}(\bx_1 \mid \bx_k, N_1, N_k),
\]
whereas the multistep model uses the anchor-based reverse chain described above.

For evaluation, we use the OpenCV \texttt{WeChatQRCode} decoder. We select 1,000 QR-code images from the validation set, simulate SPAD measurements,
generate restored samples conditioned on each measurement, and compute the decoding accuracy. Figure~\ref{fig:qr_decode_accuracy_vs_Nk_multi_vs_single} shows the results. The multistep model consistently achieves higher decoding accuracy than the single-step model in the stochastic regime, before performance saturates.

Figure~\ref{fig:qr_single_vs_multi_examples} shows representative generations from the same low-count conditioning measurements. The multistep model produces sharper and more QR-like structures than the single-step model. These results support the use of intermediate anchors: decomposing a difficult restoration problem into smaller reverse transitions improves generation quality, consistent with the motivation behind multistep generative models such as diffusion and flow models.

\begin{figure}[H]
    \centering
    \includegraphics[width=0.9\columnwidth]{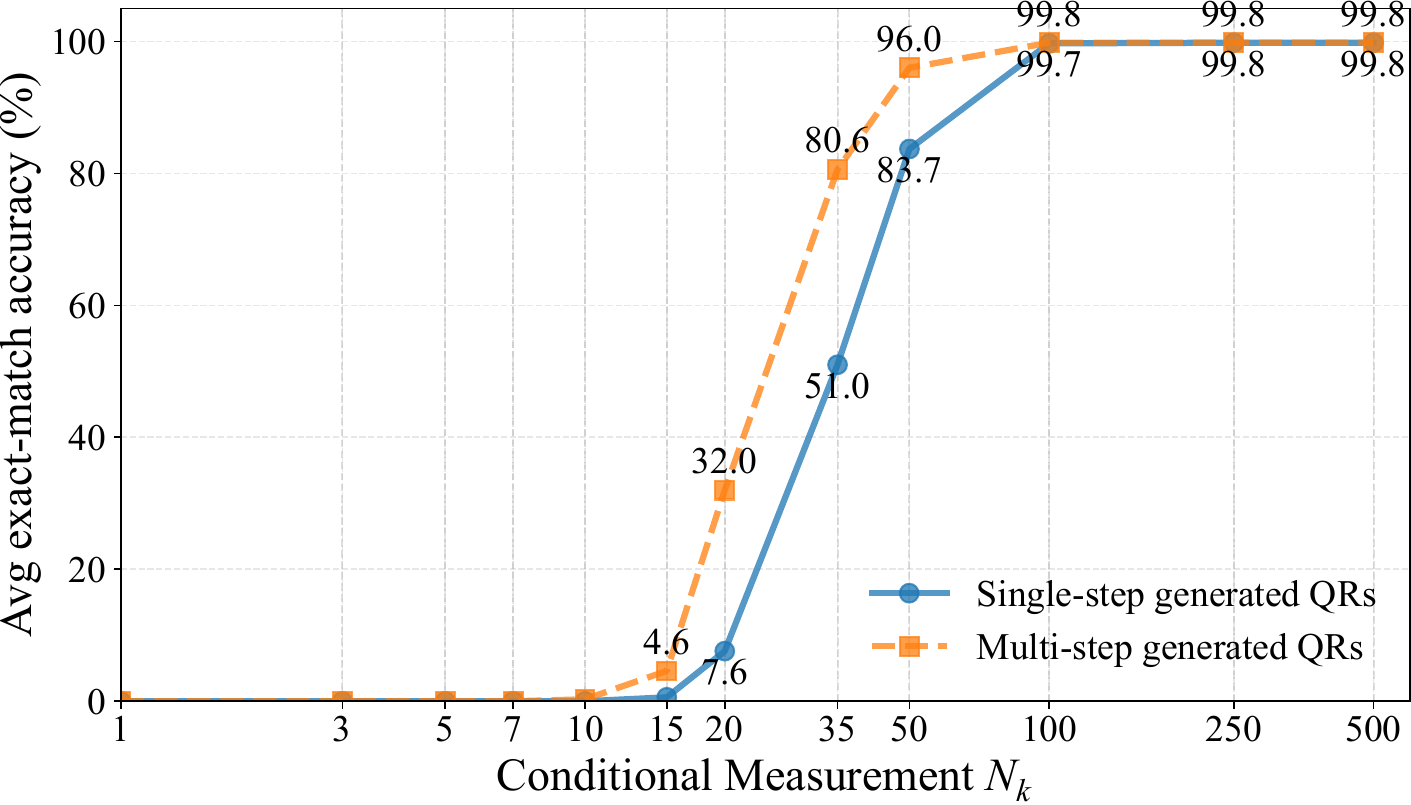}
    \caption{QR-code decoding accuracy of generated samples as a function of the number of conditioning SPAD frames \(N_k\).}
    \label{fig:qr_decode_accuracy_vs_Nk_multi_vs_single}
\end{figure}

\begin{figure}[H]
    \centering
    \includegraphics[width=1\columnwidth]{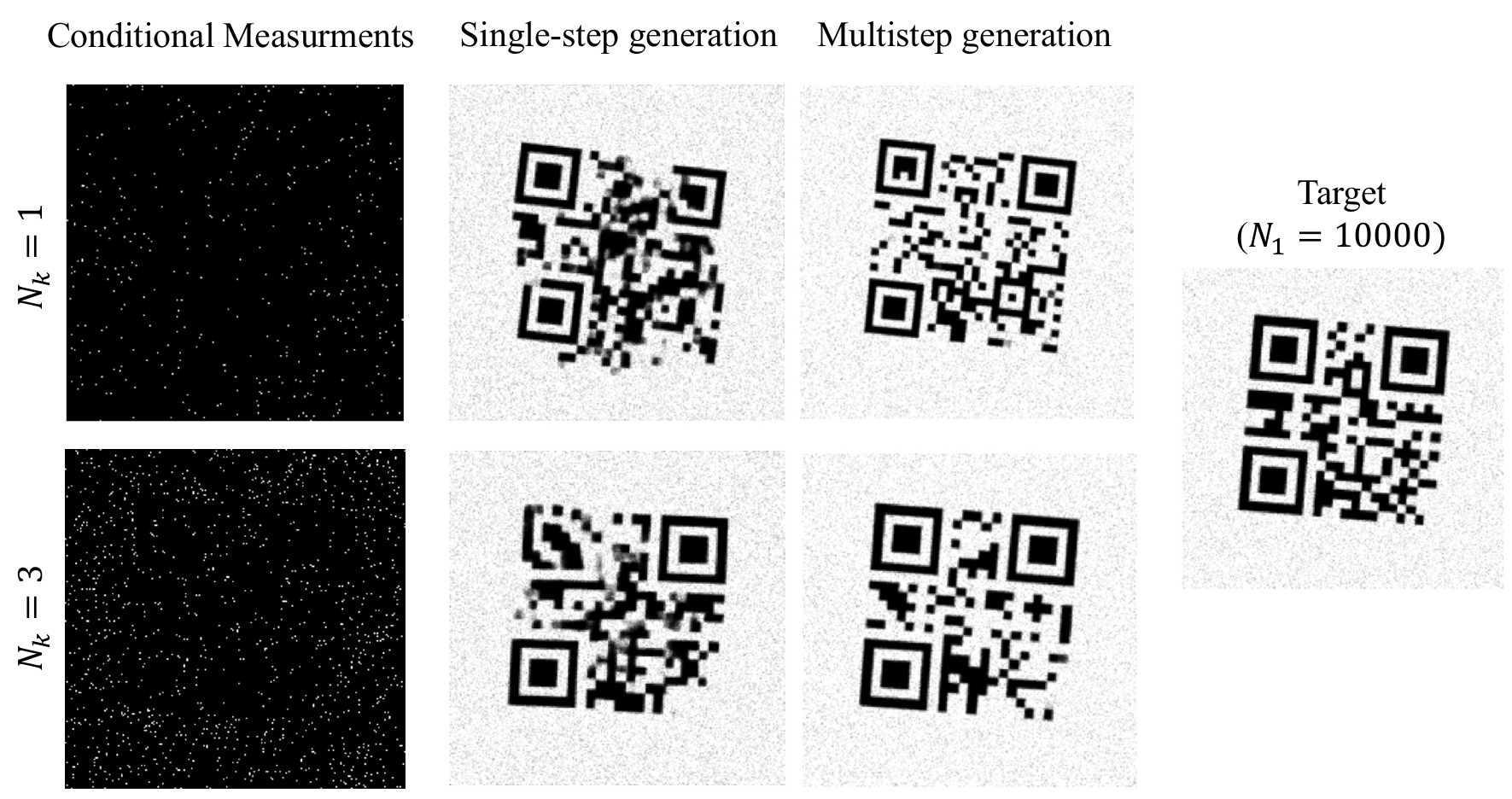}
    \caption{Comparison of single-step and multistep generated QR-code samples from the same low-count conditioning measurements.}
    \label{fig:qr_single_vs_multi_examples}
\end{figure}

%% file: refs.bib
@inproceedings{sohl2015deep,
  title={Deep unsupervised learning using nonequilibrium thermodynamics},
  author={Sohl-Dickstein, Jascha and Weiss, Eric and Maheswaranathan, Niru and Ganguli, Surya},
  booktitle={International conference on machine learning},
  pages={2256--2265},
  year={2015},
}

@article{ho2020denoising,
  title={Denoising diffusion probabilistic models},
  author={Ho, Jonathan and Jain, Ajay and Abbeel, Pieter},
  journal={Advances in neural information processing systems},
  volume={33},
  pages={6840--6851},
  year={2020}
}

@article{austin2021structured,
  title={Structured denoising diffusion models in discrete state-spaces},
  author={Austin, Jacob and Johnson, Daniel D and Ho, Jonathan and Tarlow, Daniel and Van Den Berg, Rianne},
  journal={Advances in neural information processing systems},
  volume={34},
  pages={17981--17993},
  year={2021}
}

@article{sohn2015learning,
  title={Learning structured output representation using deep conditional generative models},
  author={Sohn, Kihyuk and Lee, Honglak and Yan, Xinchen},
  journal={Advances in neural information processing systems},
  volume={28},
  year={2015}
}

@article{kingma2013auto,
  title={Auto-encoding variational bayes},
  author={Kingma, Diederik P and Welling, Max},
  journal={arXiv preprint arXiv:1312.6114},
  year={2013}
}

@article{sonderby2016ladder,
  title={Ladder variational autoencoders},
  author={S{\o}nderby, Casper Kaae and Raiko, Tapani and Maal{\o}e, Lars and S{\o}nderby, S{\o}ren Kaae and Winther, Ole},
  journal={Advances in neural information processing systems},
  volume={29},
  year={2016}
}

@article{vahdat2020nvae,
  title={{NVAE: A} deep hierarchical variational autoencoder},
  author={Vahdat, Arash and Kautz, Jan},
  journal={Advances in neural information processing systems},
  volume={33},
  pages={19667--19679},
  year={2020}
}

@article{child2020very,
  title={Very deep vaes generalize autoregressive models and can outperform them on images},
  author={Child, Rewon},
  journal={arXiv preprint arXiv:2011.10650},
  year={2020}
}

@inproceedings{sonderby2016train,
  title={How to train deep variational autoencoders and probabilistic ladder networks},
  author={S{\o}nderby, Casper Kaae and Raiko, Tapani and Maal{\o}e, Lars and S{\o}nderby, S{\o}ren Kaae and Winther, Ole},
  booktitle={33rd International Conference on Machine Learning (ICML 2016)},
  year={2016}
}

@article{sancar2025,
author = "yasin sancar",
title = "{QR Code Dataset V2}",
year = "2025",
month = "2",
url = "https://figshare.com/articles/dataset/QR_Code_Dataset_V2/28424213",
doi = "10.6084/m9.figshare.28424213.v1"
}

@article{chen2019adaptive,
  title={Adaptive binarization of QR code images for fast automatic sorting in warehouse systems},
  author={Chen, Rongjun and Yu, Yongxing and Xu, Xiansheng and Wang, Leijun and Zhao, Huimin and Tan, Hong-Zhou},
  journal={Sensors},
  volume={19},
  number={24},
  pages={5466},
  year={2019},
  publisher={MDPI}
}

@article{qian2017optimization,
  title={Optimization of QR code readability in movement state using response surface methodology for implementing continuous chain traceability},
  author={Qian, Jianping and Du, Xiaowei and Zhang, Baoyan and Fan, Beilei and Yang, Xinting},
  journal={Computers and Electronics in Agriculture},
  volume={139},
  pages={56--64},
  year={2017},
  publisher={Elsevier}
}

@inproceedings{karras2019style,
  title={A style-based generator architecture for generative adversarial networks},
  author={Karras, Tero and Laine, Samuli and Aila, Timo},
  booktitle={Proceedings of the IEEE/CVF conference on computer vision and pattern recognition},
  pages={4401--4410},
  year={2019}
}

@inproceedings{lugaresi2019mediapipe,
  title     = {{MediaPipe}: A Framework for Perceiving and Processing Reality},
  author    = {Lugaresi, Camillo and Tang, Jiuqiang and Nash, Hadon and McClanahan, Chris and Uboweja, Esha and Hays, Michael and Zhang, Fan and Chang, Chuo-Ling and Yong, Ming Guang and Lee, Juhyun and Chang, Wan-Teh and Hua, Wei and Georg, Manfred and Grundmann, Matthias},
  booktitle = {Third Workshop on Computer Vision for AR/VR at IEEE Computer Vision and Pattern Recognition},
  year      = {2019}
}

@misc{google_mediapipe_facelandmarker,
  title        = {{MediaPipe} Face Landmarker},
  author       = {{Google AI Edge}},
  howpublished = {\url{https://ai.google.dev/edge/mediapipe/solutions/vision/face_landmarker}},
  note         = {Accessed: 2026-05-10}
}

@article{lecun2002gradient,
  title={Gradient-based learning applied to document recognition},
  author={LeCun, Yann and Bottou, L{\'e}on and Bengio, Yoshua and Haffner, Patrick},
  journal={Proceedings of the IEEE},
  volume={86},
  number={11},
  pages={2278--2324},
  year={2002},
  publisher={Ieee}
}

@inproceedings{zhu2023denoising,
  title={Denoising diffusion models for plug-and-play image restoration},
  author={Zhu, Yuanzhi and Zhang, Kai and Liang, Jingyun and Cao, Jiezhang and Wen, Bihan and Timofte, Radu and Van Gool, Luc},
  booktitle={Proceedings of the IEEE/CVF conference on computer vision and pattern recognition},
  pages={1219--1229},
  year={2023}
}

@inproceedings{wei2023passive,
  title={Passive ultra-wideband single-photon imaging},
  author={Wei, Mian and Nousias, Sotiris and Gulve, Rahul and Lindell, David B and Kutulakos, Kiriakos N},
  booktitle={Proceedings of the IEEE/CVF International Conference on Computer Vision},
  pages={8135--8146},
  year={2023}
}

@inproceedings{liu2025noise,
  title={Noise-Modeled Diffusion Models for Low-Light Spike Image Restoration},
  author={Liu, Ruonan and Zhu, Lin and Xiang, Xijie and Wang, Lizhi and Huang, Hua},
  booktitle={Proceedings of the IEEE/CVF International Conference on Computer Vision},
  pages={4080--4089},
  year={2025}
}

@inproceedings{chung2023diffusion,
  title={DIFFUSION POSTERIOR SAMPLING FOR GENERAL NOISY INVERSE PROBLEMS},
  author={Chung, Hyungjin and Kim, Jeongsol and McCann, Michael T and Klasky, Marc L and Ye, Jong Chul},
  booktitle={International Conference on Learning Representations},
  year={2023}
}

@article{song2020score,
  title={Score-based generative modeling through stochastic differential equations},
  author={Song, Yang and Sohl-Dickstein, Jascha and Kingma, Diederik P and Kumar, Abhishek and Ermon, Stefano and Poole, Ben},
  journal={arXiv preprint arXiv:2011.13456},
  year={2020}
}

@article{melidonis2025score,
  title={Score-based denoising diffusion models for photon-starved image restoration problems},
  author={Melidonis, Savvas and Xi, Yiming and Zygalakis, Konstantinos C and Altmann, Yoann and Pereyra, Marcelo},
  journal={Transactions on Machine Learning Research},
  year={2025}
}

@article{dvir2026diffusion,
  title={Diffusion in SPAD Signals},
  author={Dvir, Lior and Torem, Nadav and Schechner, Yoav Y},
  journal={arXiv preprint arXiv:2601.07599},
  year={2026}
}

@inproceedings{chennuri2025quanta,
  title={Quanta Diffusion},
  author={Chennuri, Prateek and Fu, Dongdong and Chan, Stanley H},
  booktitle={IEEE International Conference on Image Processing (ICIP)},
  pages={229--234},
  year={2025},
}

@inproceedings{chennuri2024quanta,
  title={Quanta video restoration},
  author={Chennuri, Prateek and Chi, Yiheng and Jiang, Enze and Godaliyadda, GM Dilshan and Gnanasambandam, Abhiram and Sheikh, Hamid R and Gyongy, Istvan and Chan, Stanley H},
  booktitle={European Conference on Computer Vision},
  pages={152--171},
  year={2024},
}

@article{lai2025poisson,
  title={Poisson diffusion probabilistic model for low-dose SPECT sinogram denoising},
  author={Lai, Peng and Wu, Ruifan and Yuan, Woliang and Li, Haiying and Jiang, Ying},
  journal={Medical Physics},
  volume={52},
  number={6},
  pages={4250--4265},
  year={2025},
}

@article{nachmani2021non,
  title={Non gaussian denoising diffusion models},
  author={Nachmani, Eliya and Roman, Robin San and Wolf, Lior},
  journal={arXiv preprint arXiv:2106.07582},
  year={2021}
}

@article{xie2023diffusion,
  title={Diffusion model for generative image denoising},
  author={Xie, Yutong and Yuan, Minne and Dong, Bin and Li, Quanzheng},
  journal={arXiv preprint arXiv:2302.02398},
  year={2023}
}

@inproceedings{chen2023learning,
  title={Learning to jump: Thinning and thickening latent counts for generative modeling},
  author={Chen, Tianqi and Zhou, Mingyuan},
  booktitle={International Conference on Machine Learning},
  pages={5367--5382},
  year={2023},
}

@article{dhariwal2021diffusion,
  title={Diffusion models beat gans on image synthesis},
  author={Dhariwal, Prafulla and Nichol, Alexander},
  journal={Advances in neural information processing systems},
  volume={34},
  pages={8780--8794},
  year={2021}
}

@article{goodfellow2014generative,
  title={Generative adversarial nets},
  author={Goodfellow, Ian J and Pouget-Abadie, Jean and Mirza, Mehdi and Xu, Bing and Warde-Farley, David and Ozair, Sherjil and Courville, Aaron and Bengio, Yoshua},
  journal={Advances in neural information processing systems},
  volume={27},
  year={2014}
}

@article{garg2026gqir,
  title={gQIR: Generative Quanta Image Reconstruction},
  author={Garg, Aryan and Ma, Sizhuo and Gupta, Mohit},
  journal={arXiv preprint arXiv:2602.20417},
  year={2026}
}
